\newcommand{\R}{\mathbb{R}}
\newtheorem{theorem}{\bf Theorem}[section]
\newtheorem{definition}[theorem]{\bf Definition}
\newtheorem{remark}[theorem]{\bf Remark}
\newtheorem{lemma}[theorem]{\bf Lemma}
\newcommand{\Face}{\mbox{Face}}
\newcommand{\Center}{\mbox{Center}}
\newcommand{\Edge}{\mbox{Edge}}
\newcommand{\Extent}{\mbox{Extent}}
\newcommand{\Prev}{\mbox{Prev}}
\newcommand{\Next}{\mbox{Next}}
\newcommand{\Ridge}{\mbox{Ridge}}
\newcommand{\Parent}{\mbox{Parent}}
\newcommand{\Forecast}{\mbox{Forecast}}
\newcommand{\IsPropagated}{\mbox{IsPropagated}}
\begin{document}

\title[Cut-locus on a polyhedron]
{An extended MMP algorithm: \\
	wavefront and cut-locus on a convex polyhedron}

\author[K.~Tateiri]{Kazuma Tateiri}\address[K.~Tateiri]{Graduate School of Information Science and Technology, Hokkaido University,Sapporo 060-0810, Japan}\email{kazuma.tateiri16@gmail.com}

\author[T.~Ohmoto]{Toru Ohmoto}\address[T.~Ohmoto]{Department of Mathematics, Hokkaido University,Sapporo 060-0810, Japan}\email{ohmoto@math.sci.hokudai.ac.jp}

\subjclass[2010]{}

\keywords{}

\dedicatory{}

\begin{abstract}
In the present paper, we propose a novel generalization of the celebrated MMP algorithm in order to find the {\em wavefront propagation} and the {\em cut-locus} on a convex polyhedron with an emphasis on actual implementation for instantaneous visualization and numerical computation.
\keywords{geodesics; wavefront propagation; cut locus; source unfolding.}
\end{abstract}

\maketitle

\section{Introduction}
Geometry of geodesics on polyhedra is very rich -- it attracts people since ancient times.
Nowadays, finding shortest paths and shortest distances has many real applications in engineering and industrial fields, and several algorithms for computing them have been proposed so far, see \cite{DO,Crane,Bose} and references therein.
Among polyhedral approaches, most well known is the so-called MMP algorithm, given by Mitchell, Mount and Papadimitriou \cite{MMP} (also \cite{Mount}).
We revisit this classical and well established method in computational geometry.
The aim of the present paper is to propose a novel generalization of the MMP algorithm for finding some richer structure of geodesics, the {\em wavefront propagation} and the {\em cut-locus}. While our method has some limitations (discussed later), we emphasize our actual implementation to computer program for instantaneous visualization and numerical computation \footnote{The source code is available at https://github.com/Raysphere24/IntervalWavefront}.

As a toy example, look at Figure \ref{bunny}. Here we take the convex hull of {\em Stanford Bunny} (the left picture, viewed transparently).
On this polyhedron, choose freely a source point indicated by $\times$ colored by yellow, then our algorithm creates the time-evolution of wavefronts (yellow curve in the middle) instantaneously and accurately enough, and finally it ends at the right picture. Red dots represent ridge points on the wavefront curve. As the time increases, the ridge points sweep out the cut-locus colored by green. In Figure \ref{bunny_ws}, the wavefront propagation is observed from different viewpoints, and in Figure \ref{bunny_opaque} the cut locus is viewed opaquely.

\begin{figure}[h]
	\centering
	\includegraphics[height=4cm, pagebox=cropbox]{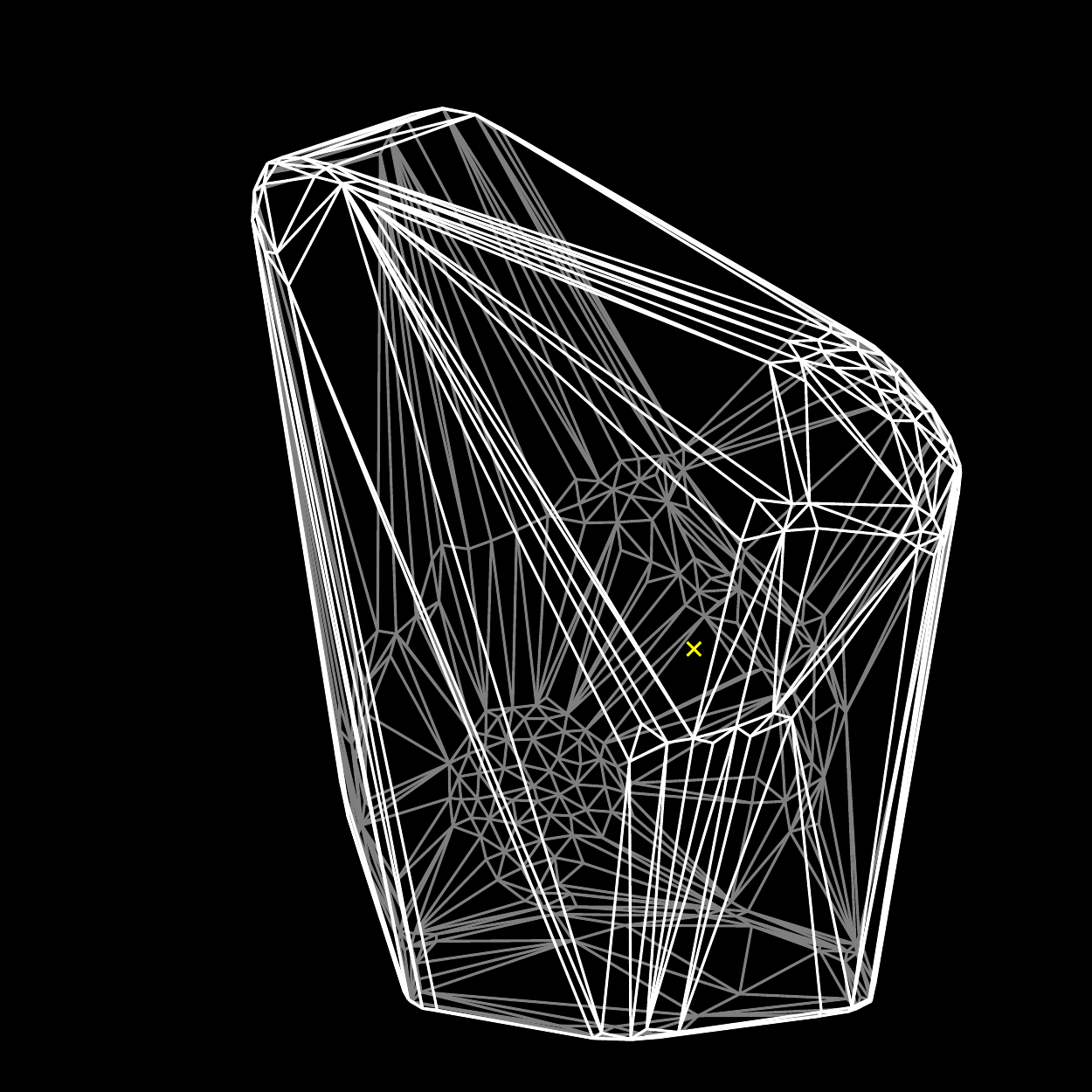}
	\includegraphics[height=4cm, pagebox=cropbox]{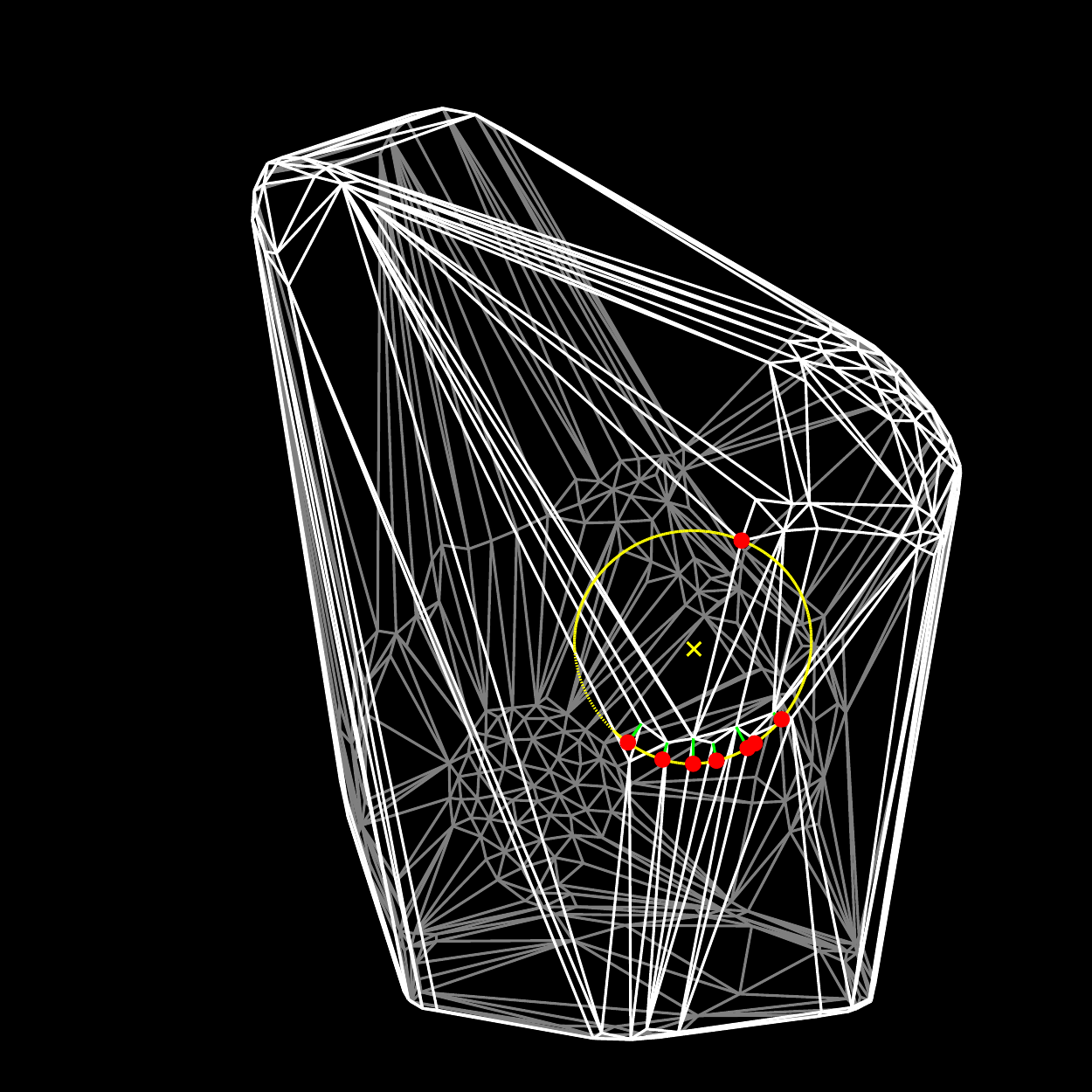}
	\includegraphics[height=4cm, pagebox=cropbox]{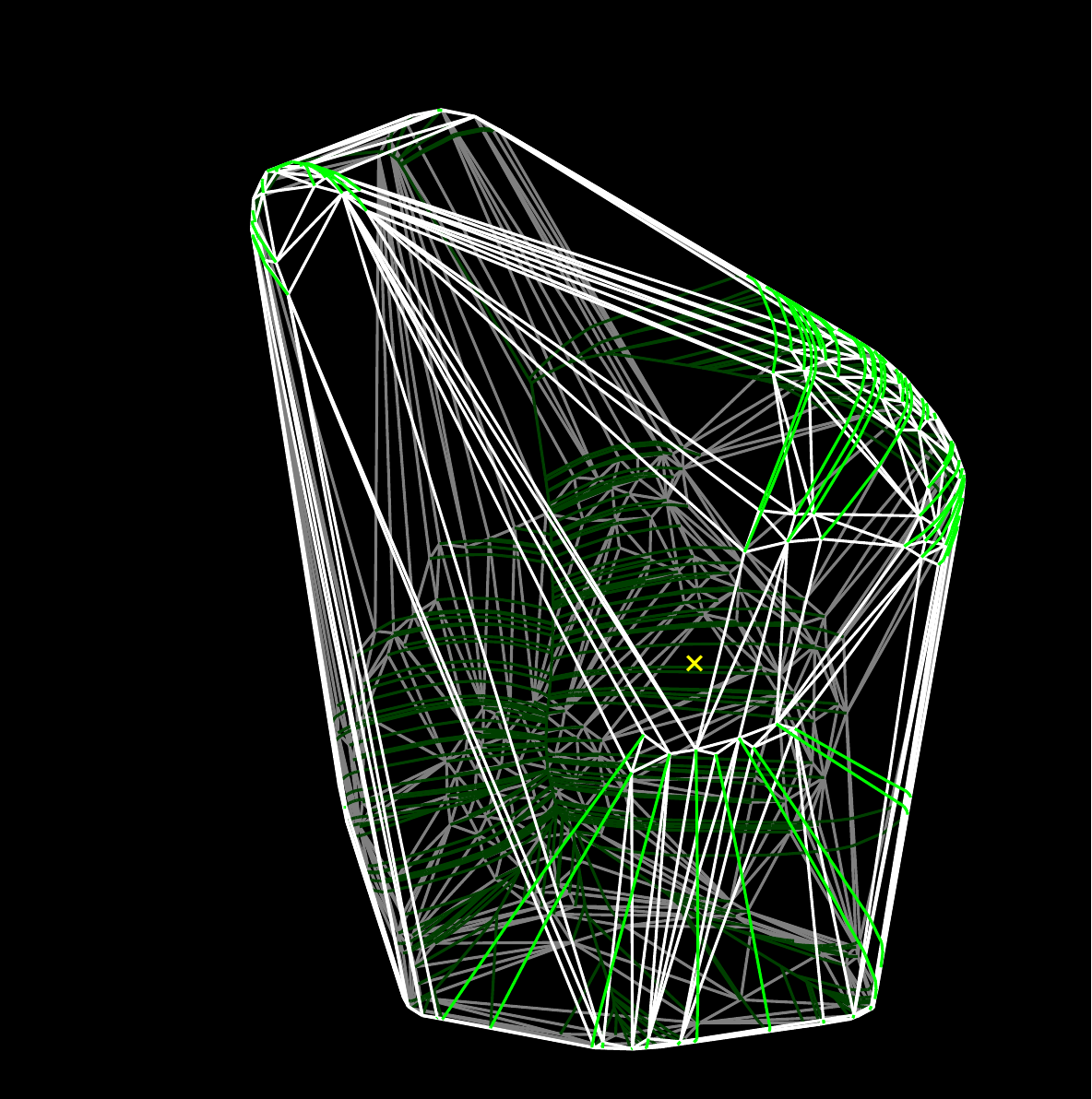}
	\caption{\small Wavefront and cut-locus.}
	\label{bunny}
\end{figure}

\begin{figure}[h]
	\centering
	\includegraphics[height=3cm, pagebox=cropbox]{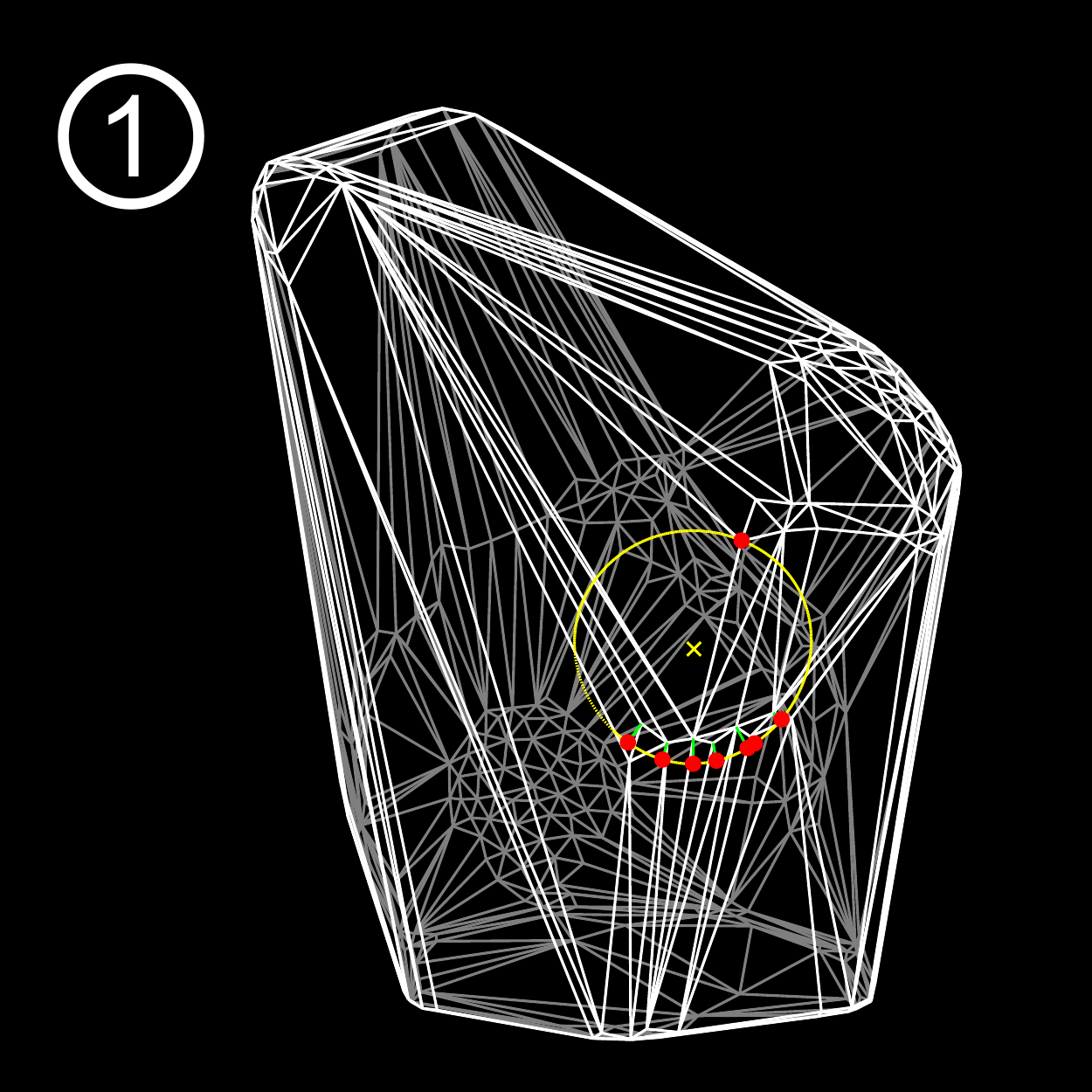}
	\includegraphics[height=3cm, pagebox=cropbox]{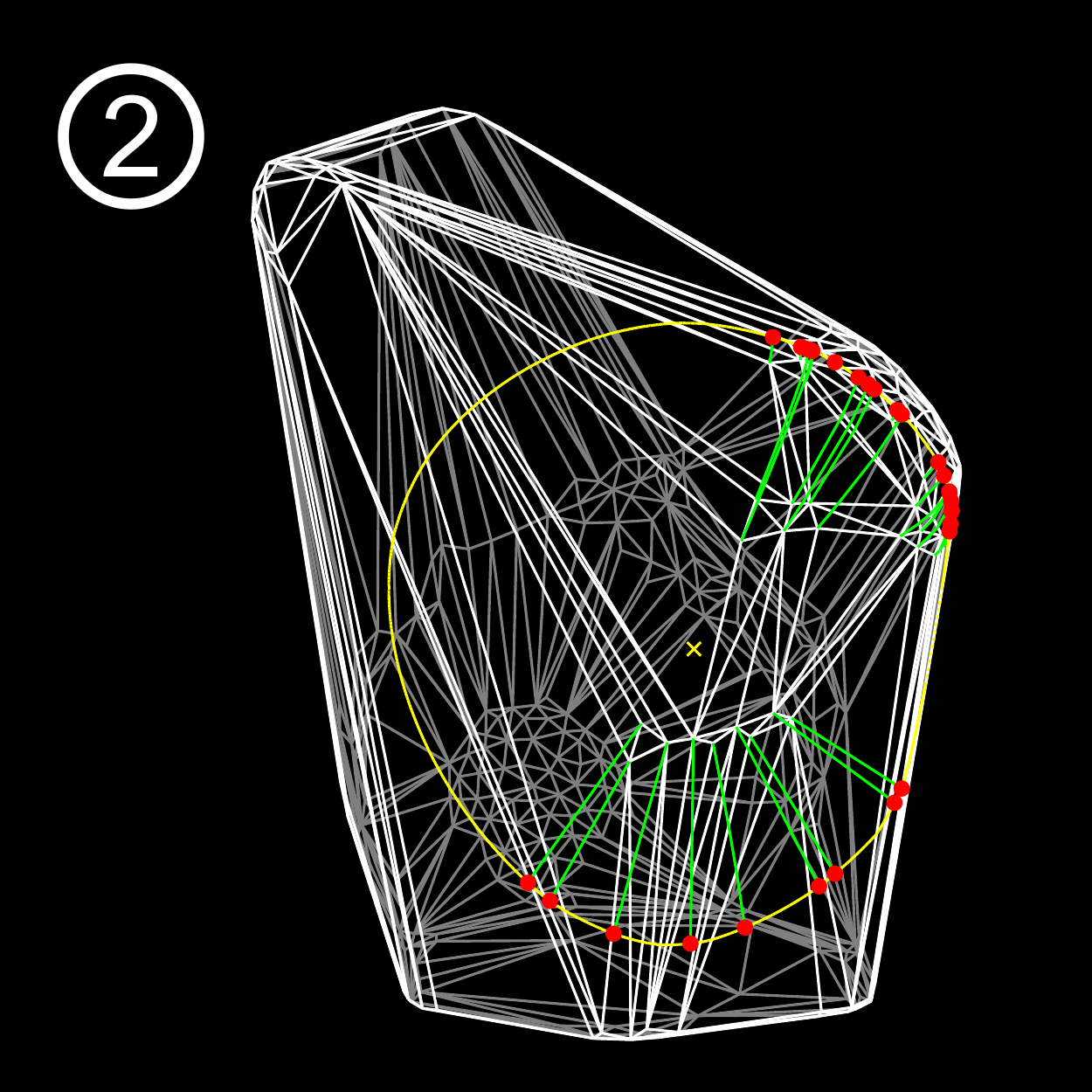}
	\includegraphics[height=3cm, pagebox=cropbox]{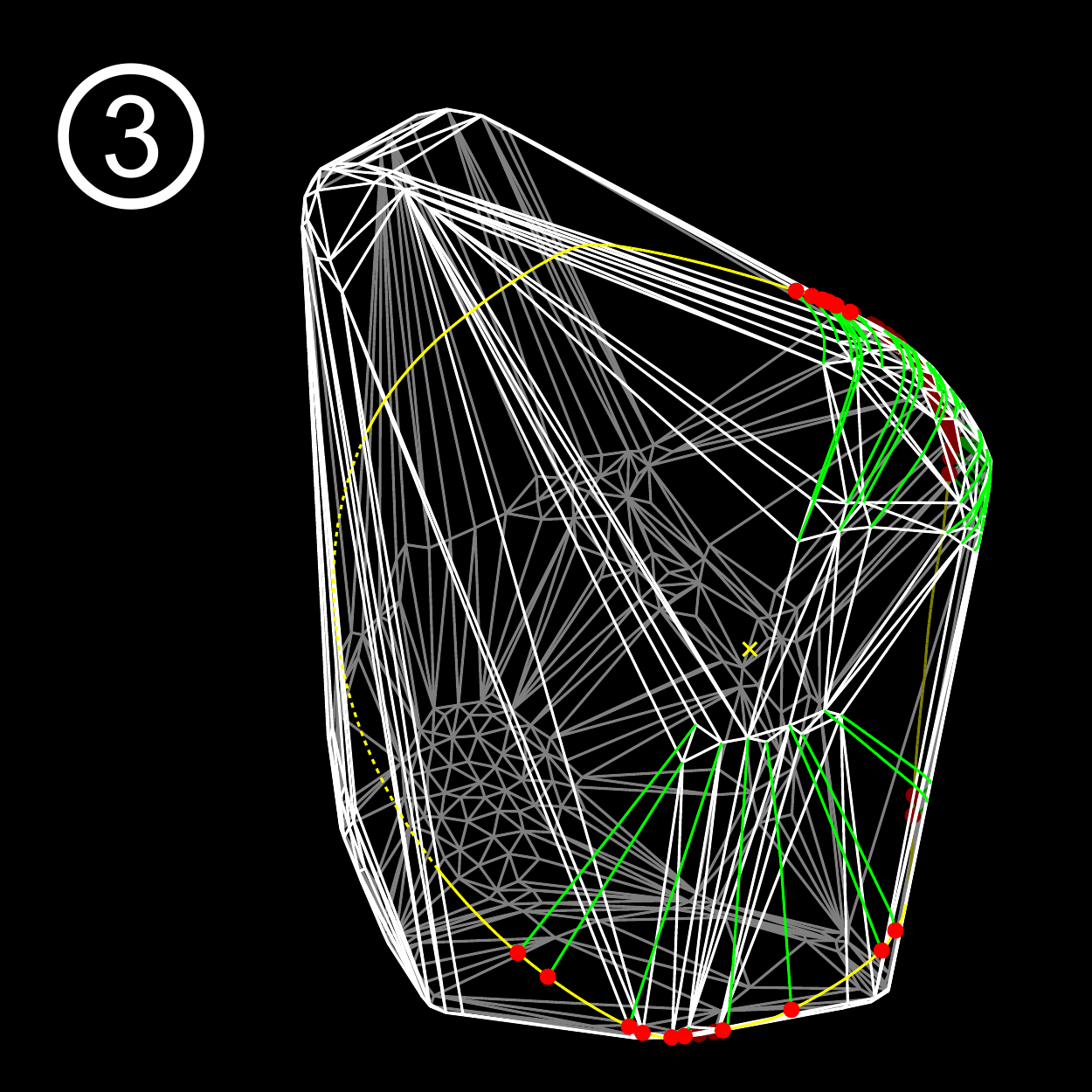}
	\includegraphics[height=3cm, pagebox=cropbox]{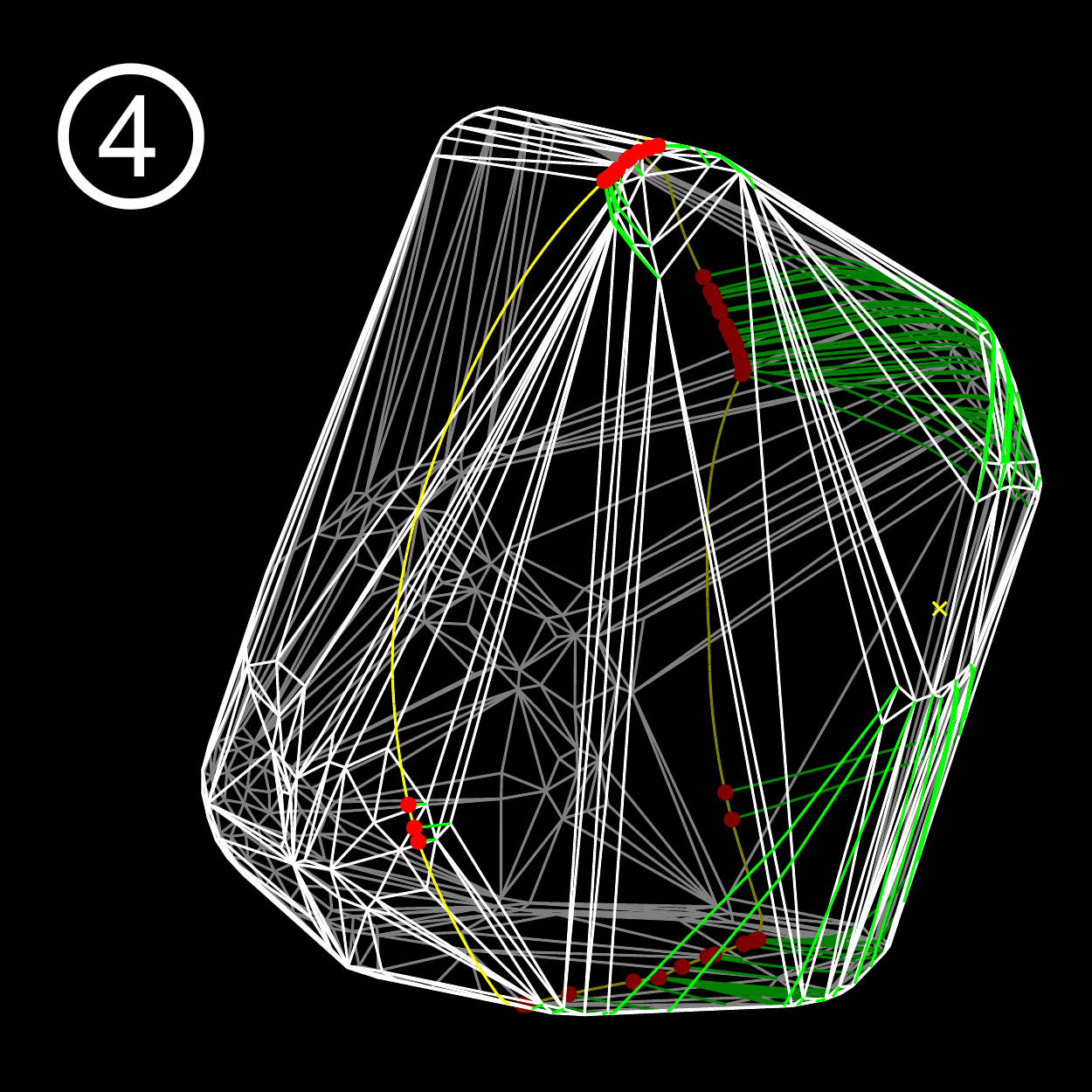}
	\includegraphics[height=3cm, pagebox=cropbox]{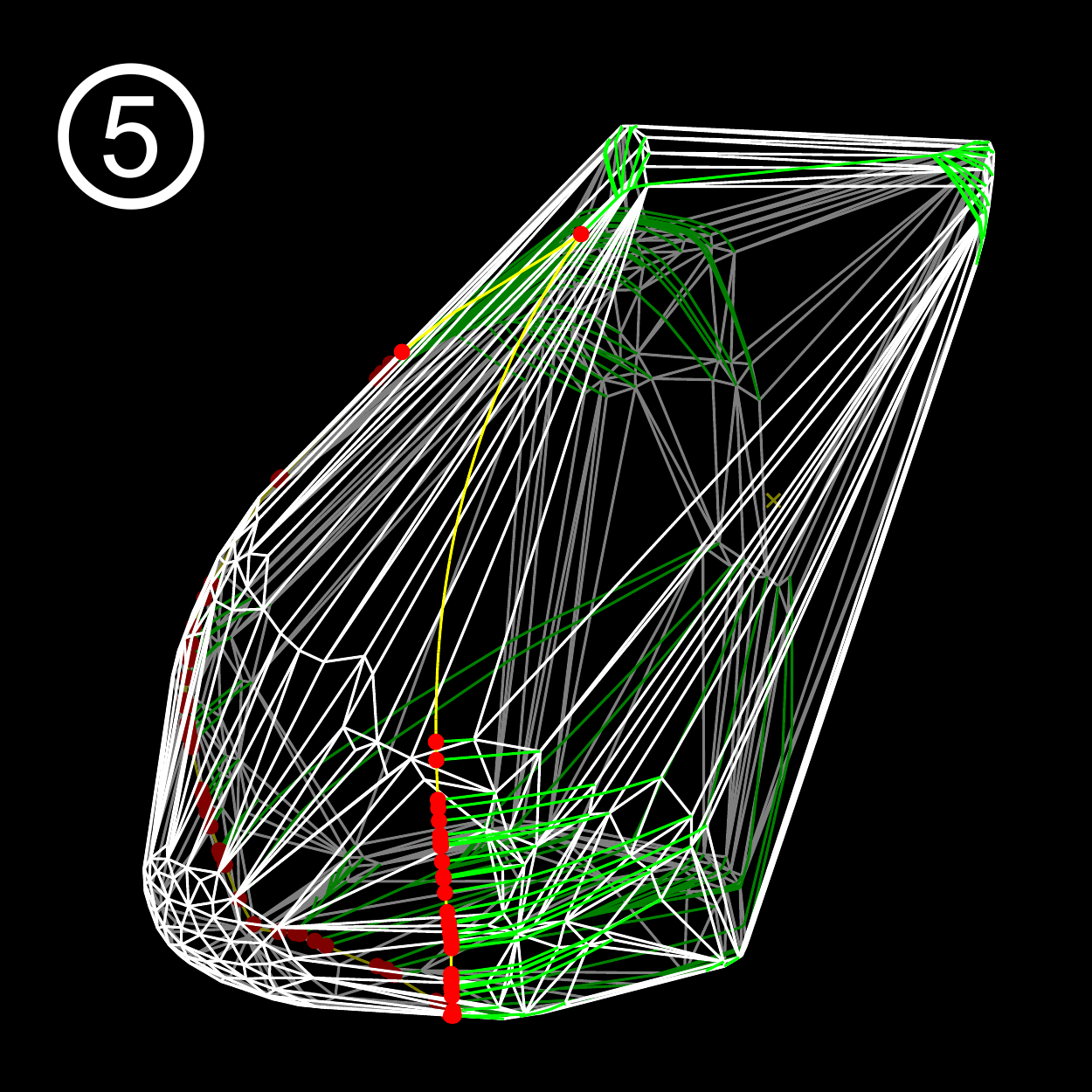}
	\includegraphics[height=3cm, pagebox=cropbox]{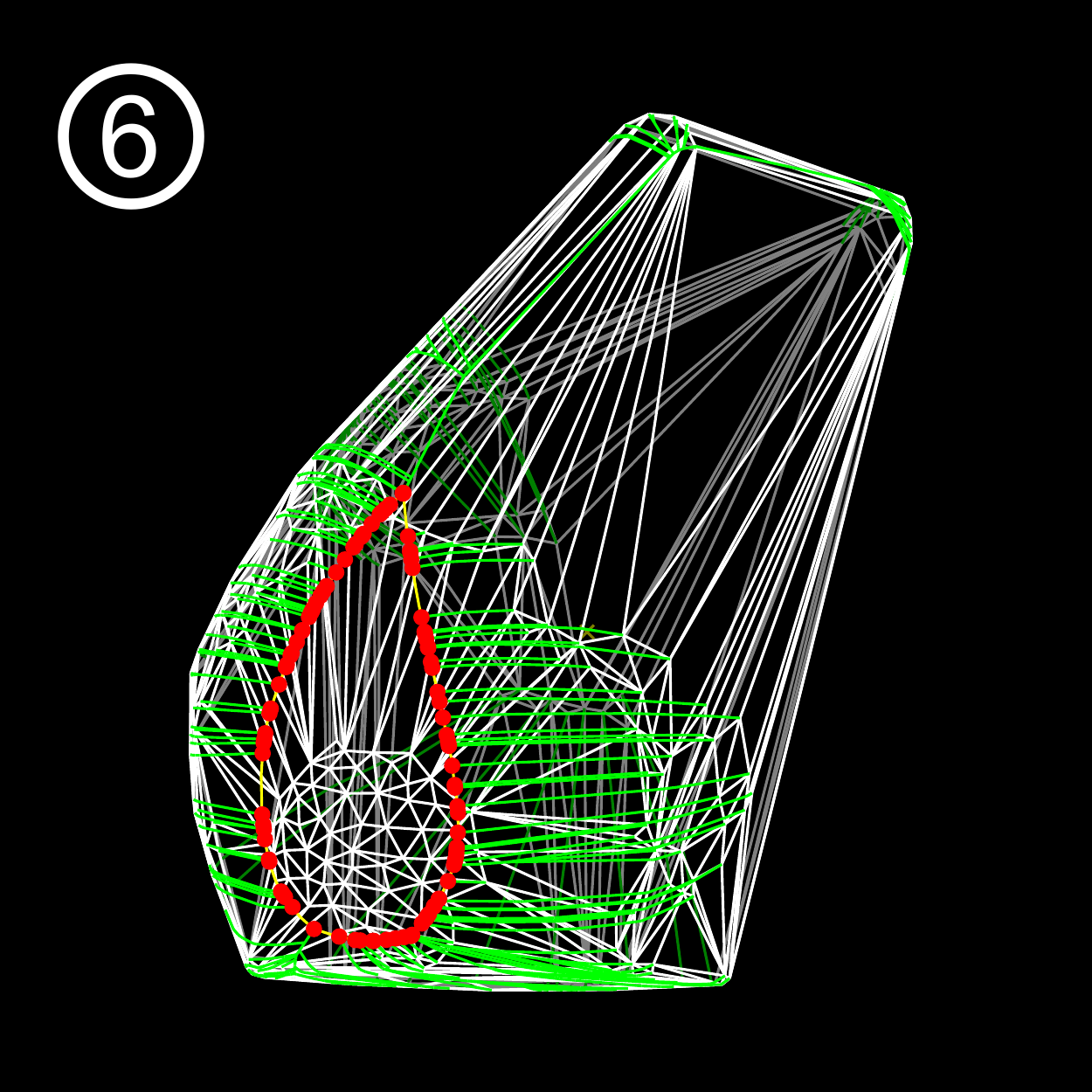}
	\includegraphics[height=3cm, pagebox=cropbox]{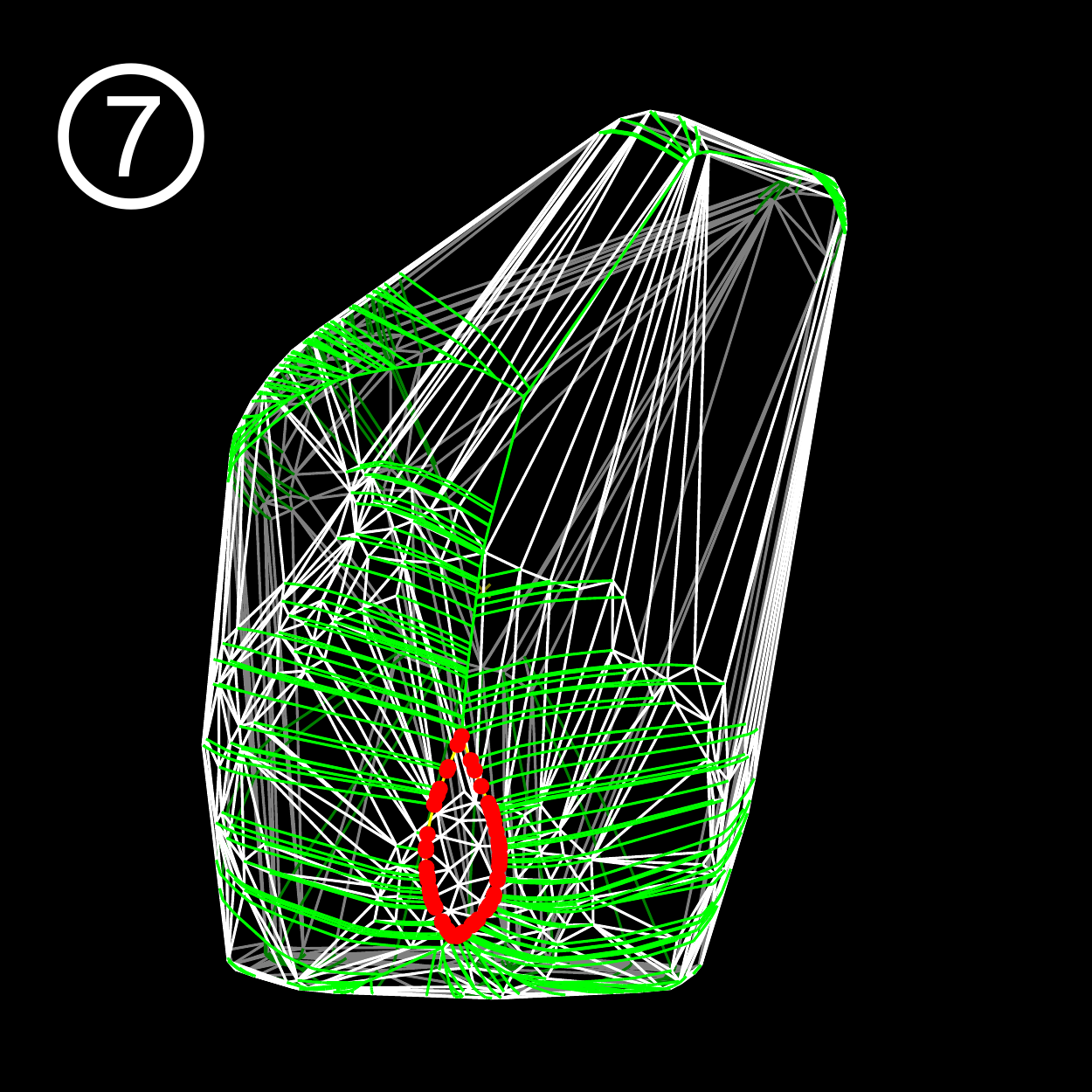}
	\includegraphics[height=3cm, pagebox=cropbox]{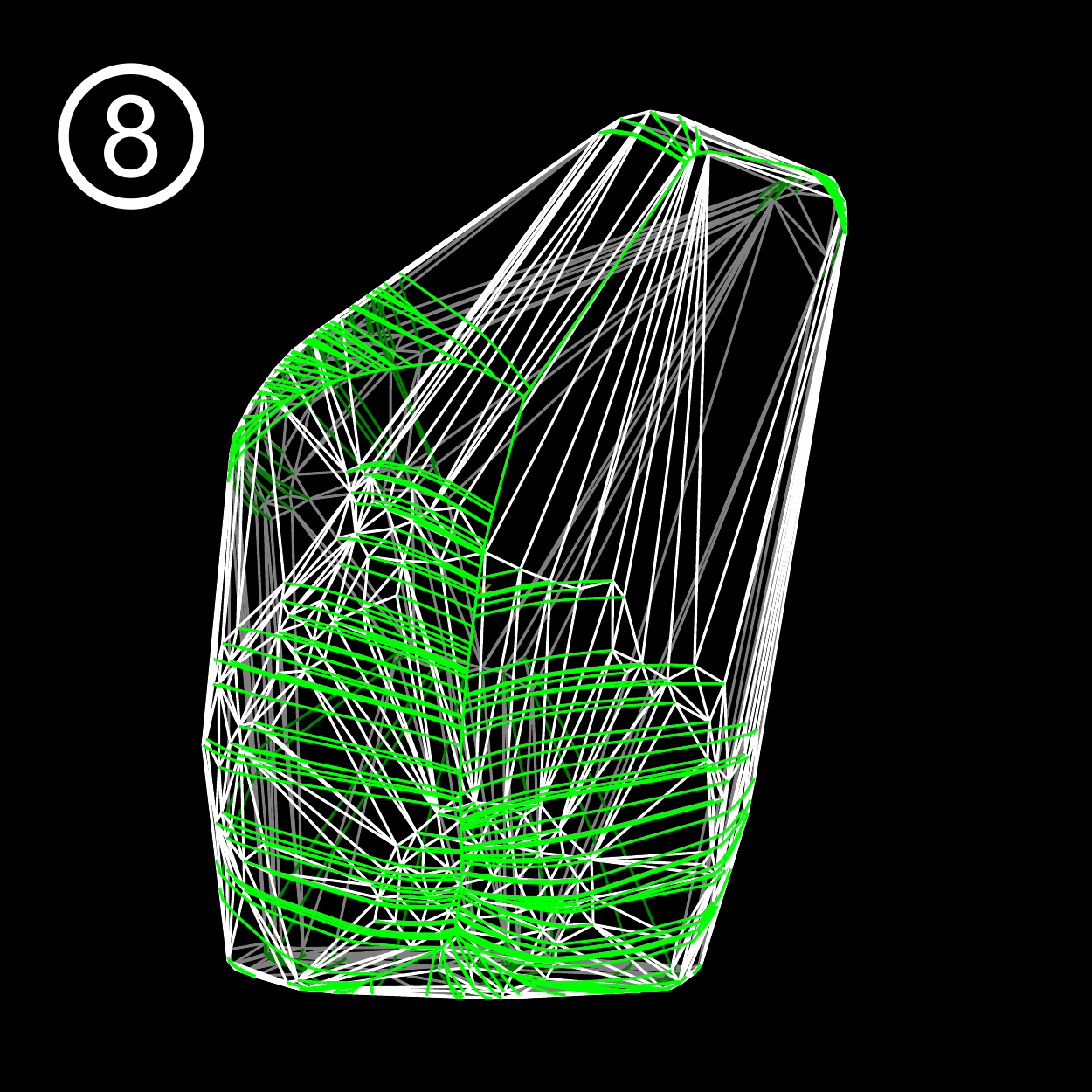}
	\caption{The wavefront $W(r)$ (yellow curve) propagates from no.1 to no.8. In our program,
	the viewpoint can be chosen freely in an interactive way;
	no.8 is a different view of the right picture in Figure \ref{bunny}.
	}
	\label{bunny_ws}
\end{figure}

\begin{figure}[h]
	\centering
	\includegraphics[height=6cm, pagebox=cropbox]{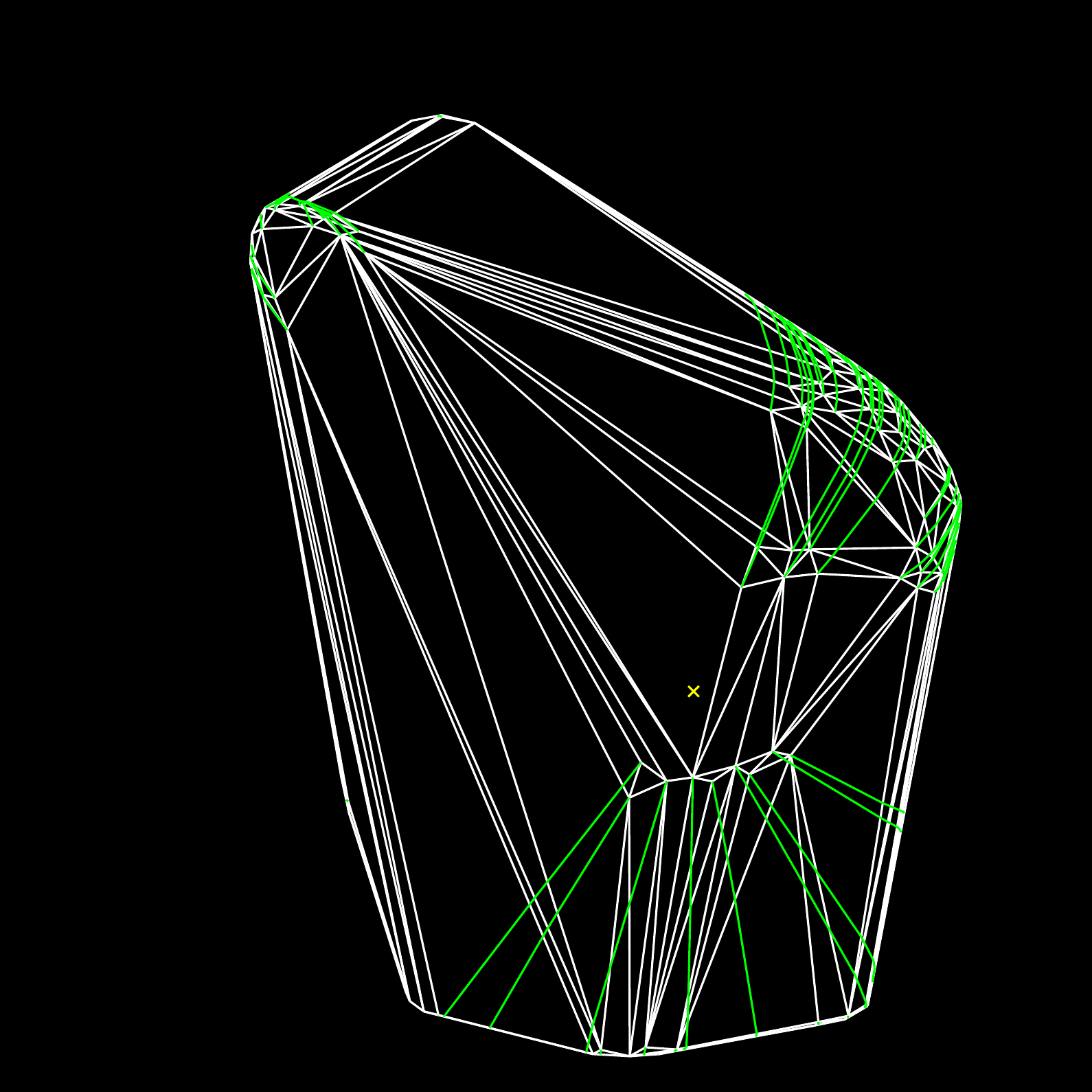}
	\includegraphics[height=6cm, pagebox=cropbox]{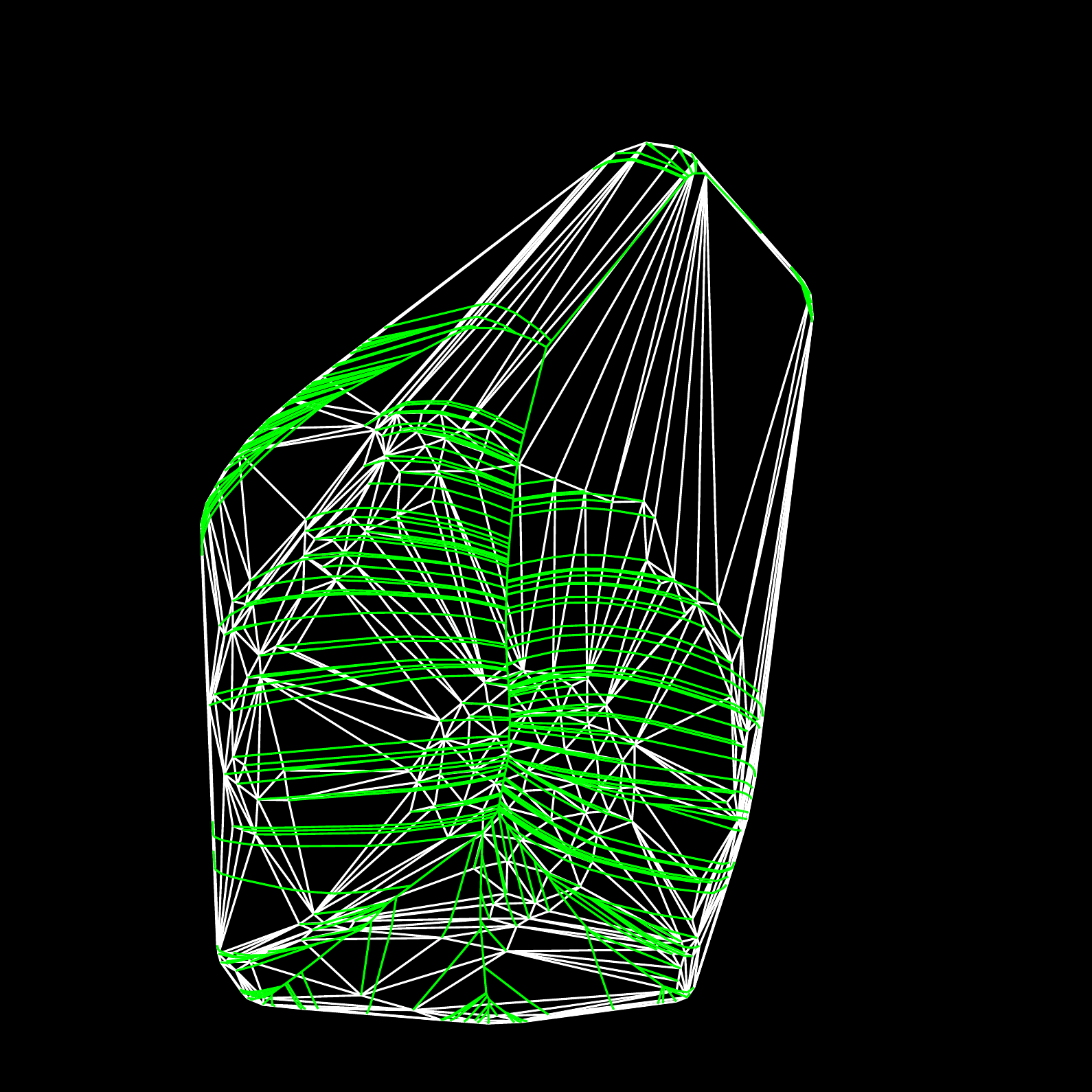}
	\caption{Opaque renderings of the cut-locus (two different viewpoints)}
	\label{bunny_opaque}
\end{figure}

First, we fix basic terminologies precisely.
Let $S$ be a convex polyhedral surface in Euclidean space $\R^3$, and let $d:S\times S \to \R$ denote the distance function on $S$. Pick a point $p$ of $S$, and call it a {\em source point}.
Given $r>0$, the {\em wavefront} on $S$ caused from $p$ is defined by the set of points of $S$ with iso-distance $r$ from $p$:
$$W(r):=\{\; x \in S \; | \; d(p, x) =r\; \}$$
(also we may write it by $W(r, p)$). Suppose that the wavefront $W(r)$ propagates on $S$ with a constant speed, as $r$ varies. If $p$ is an interior point of a face, then initially $W(r)$ is just a circle centered at $p$ with small radius $r$ on the face. As $r$ increases, $W(r)$ is still a loop on $S$, until it collapses to the farthest point from $p$ (Figure \ref{bunny_ws}) or it occurs a self-intersection and breaks off into multiple disjoint pieces (we call the moment a {\em bifurcation event}, see Figure \ref{genericevents} (b)). Here our method has a limitation of incapable of cope with a bifurcation event, and in that case, we are interested in the wavefront propagation up to that moment. We will discuss this point later.

Geometrically, $W(r)$ is made up of circular arcs.
Two neighboring arcs may be joined by a {\em ridge point} of $W(r)$, which is a point having at least two distinct shortest paths from $p$.
A new ridge point is created when $W(r)$ hits a vertex of $S$ (we call it a {\em vertex event}), see Figure \ref{ridge_point}.
The locus of ridge points of $W(r)$ for all $r>0$ is called the {\em cut-locus} $C=C(S, p)$ on $S$ (here let $C$ contain all vertices of $S$, at which vertex events happen).
It is a graph embedded in $S$ and has a clear geometric meaning.
When one cuts $S$ along $C$ by a scissor, the whole of $S$ is expanded to the plane so that the obtained unfolding (net, development) is a star-shaped polygon without any overlap -- every point of the unfolding can be joined with $p$ by a line segment lying on it (Figure \ref{source_unfolding}), which corresponds to a shortest path on $S$. We call it the {\em source unfolding of $S$ centered at $p$}.

\begin{figure}[h]
	\centering
	\includegraphics[height=6cm, pagebox=cropbox]{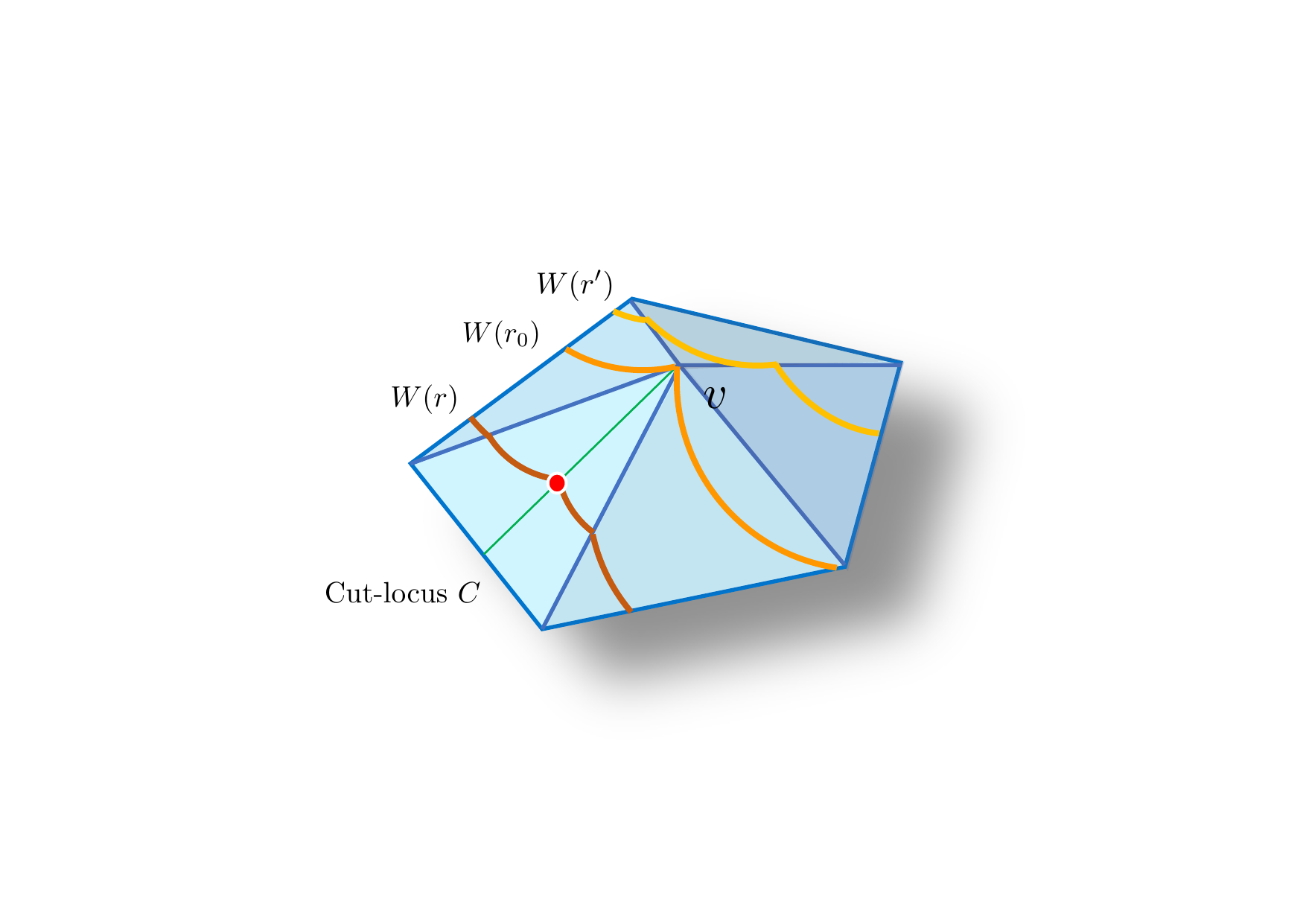}
	\caption{Ridge points are born at vertices and sweep out the cut-locus ($r'<r_0<r$).}
	\label{ridge_point}
\end{figure}

\begin{figure}[h]
	\centering
	\includegraphics[width=9.5cm, pagebox=cropbox]{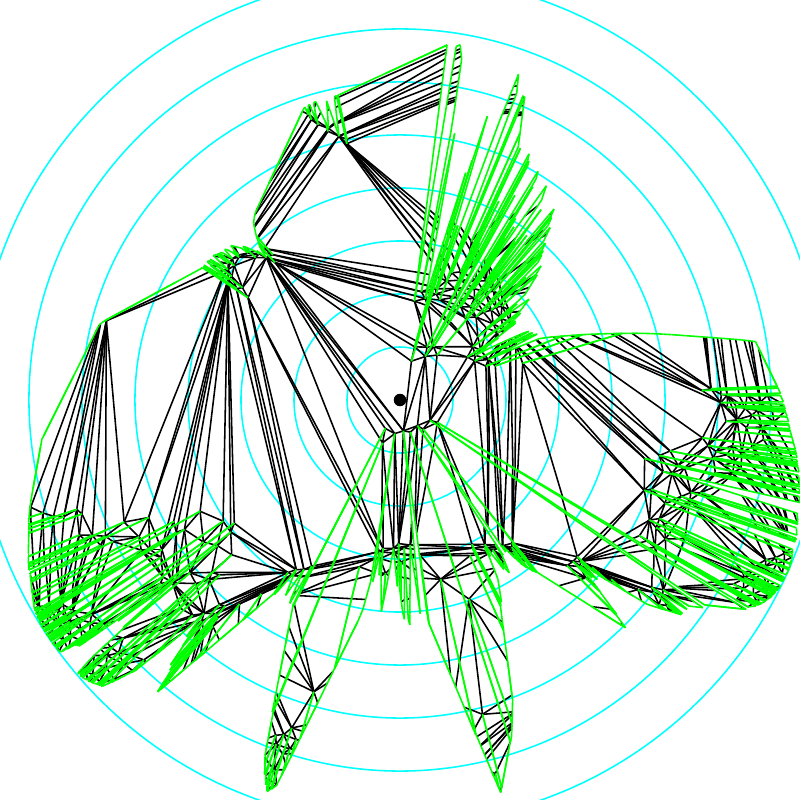}
	\caption{Source unfolding for the same example as in Figures \ref{bunny} and \ref{bunny_ws}; our program immediately produces it just after the wavefront propagation ends. Concentric circles on the unfolding represent the wavefronts on $S$.
	}
	\label{source_unfolding}
\end{figure}

Research in computational geometry on the cut-locus and its source unfolding has been investigated so far by several authors, e.g., \cite{CH,DO,Itoh,MMP,Mount}. Nevertheless, our approach seems to be new.
Our problem is to interactively visualize the wavefront propagation $W(r)$ and compute the cut-locus $C$ as $r$ varies, and to finally produce the source unfolding of $S$ precisely.
That is designed for a practical and interactive use -- for instance, in our specifications, the source point $p$ is chosen by a click on the screen and the viewpoint for $S$ can freely be rotated manually.
Here is a key point that we may assume that $p$ lies in sufficiently general position; this practical assumption enables us to classify geometric events arising in the propagation into several types (see \S 2), and then the algorithm becomes simple enough to be treated.
Actually, our computer program certainly works, even when we choose $p$ lying on an edge or a vertex in a visible sense (i.e., choose $p$ within a very small distance $\varepsilon (\ll 1)$ from the edge or the vertex), see Figure \ref{icosahedron_vertex_or_edge}.

\begin{figure}[h]
	\centering
	\includegraphics[height=3.75cm, pagebox=cropbox]{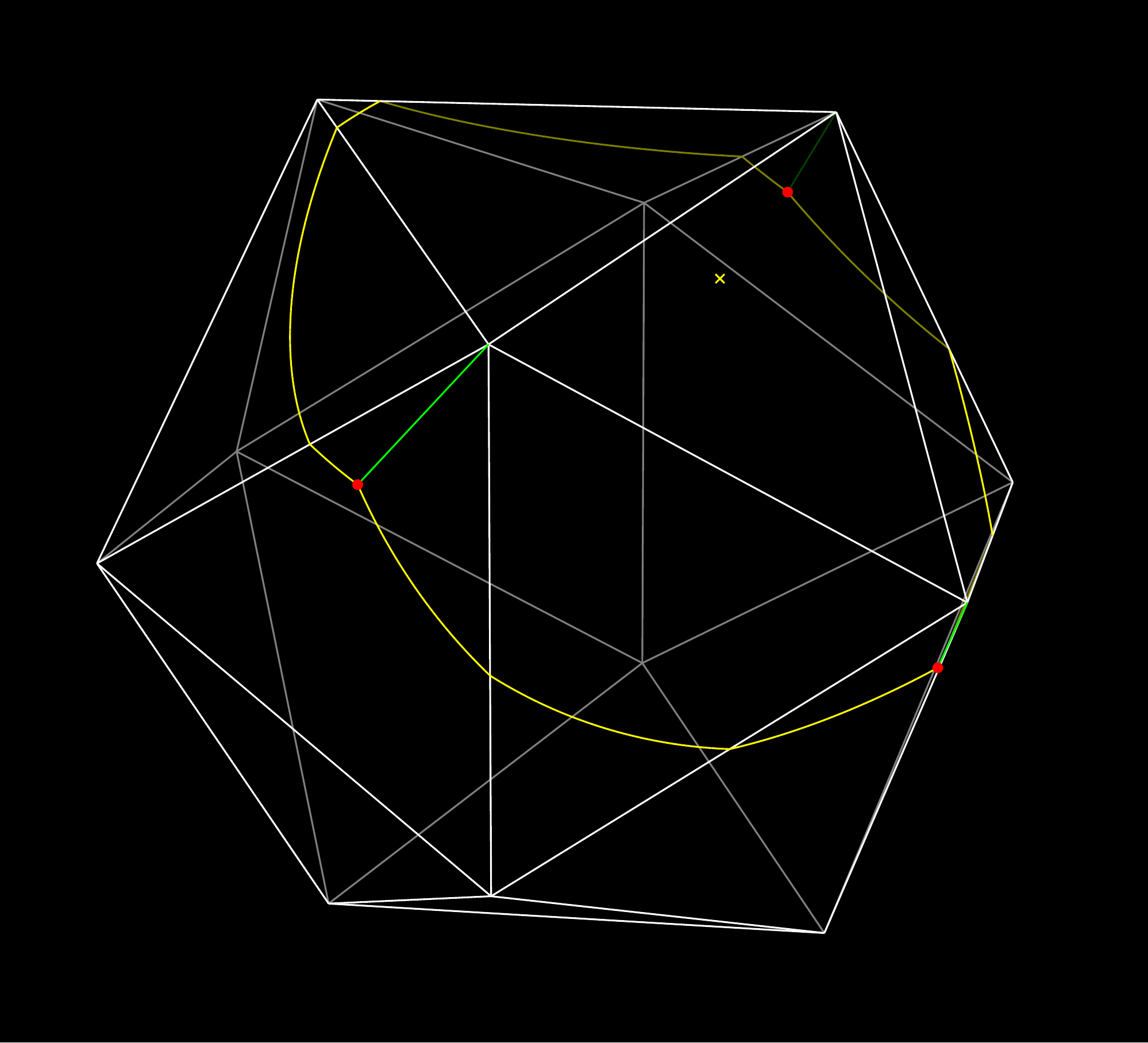}
	\includegraphics[height=3.75cm, pagebox=cropbox]{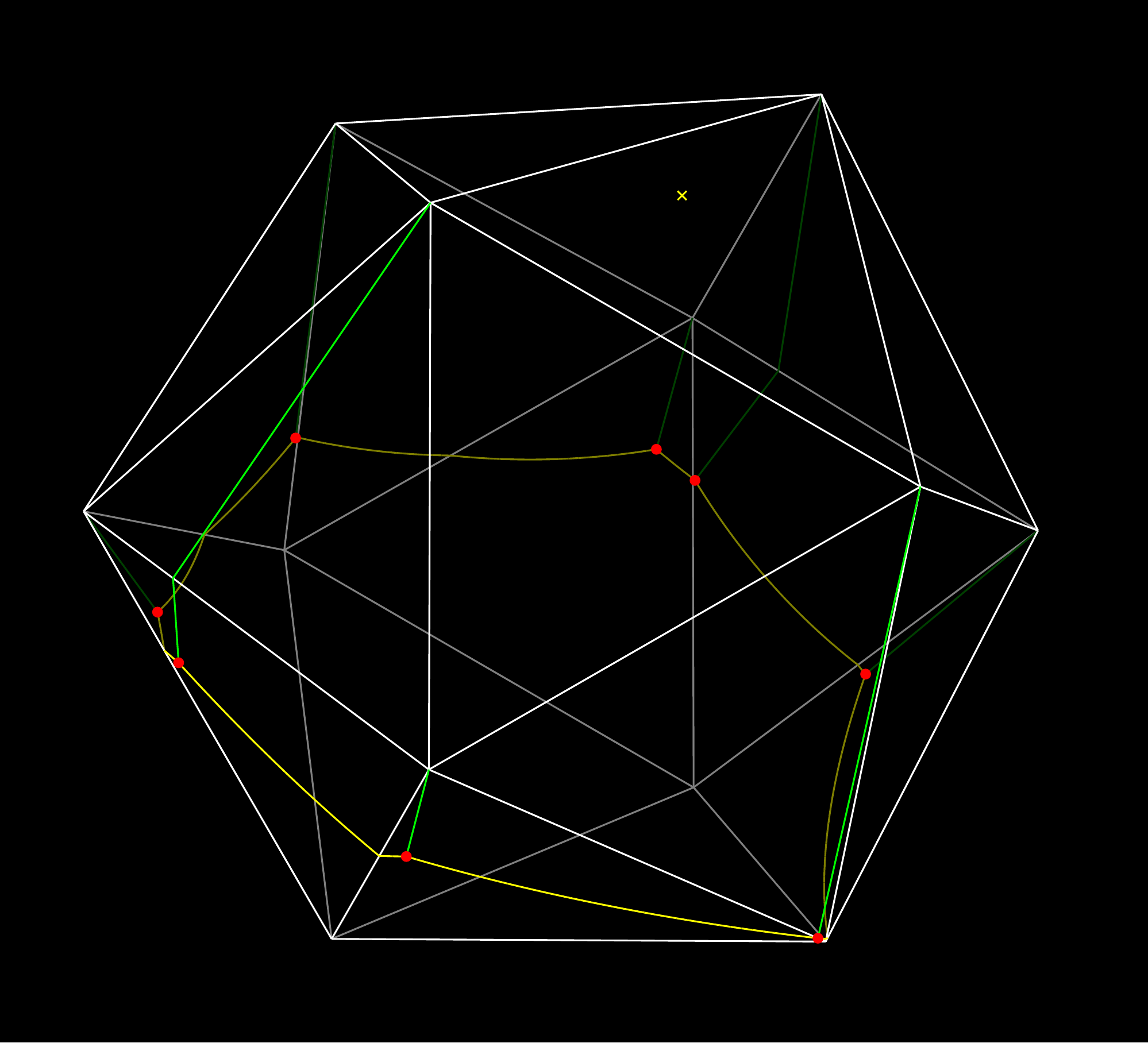}
	\includegraphics[height=3.75cm, pagebox=cropbox]{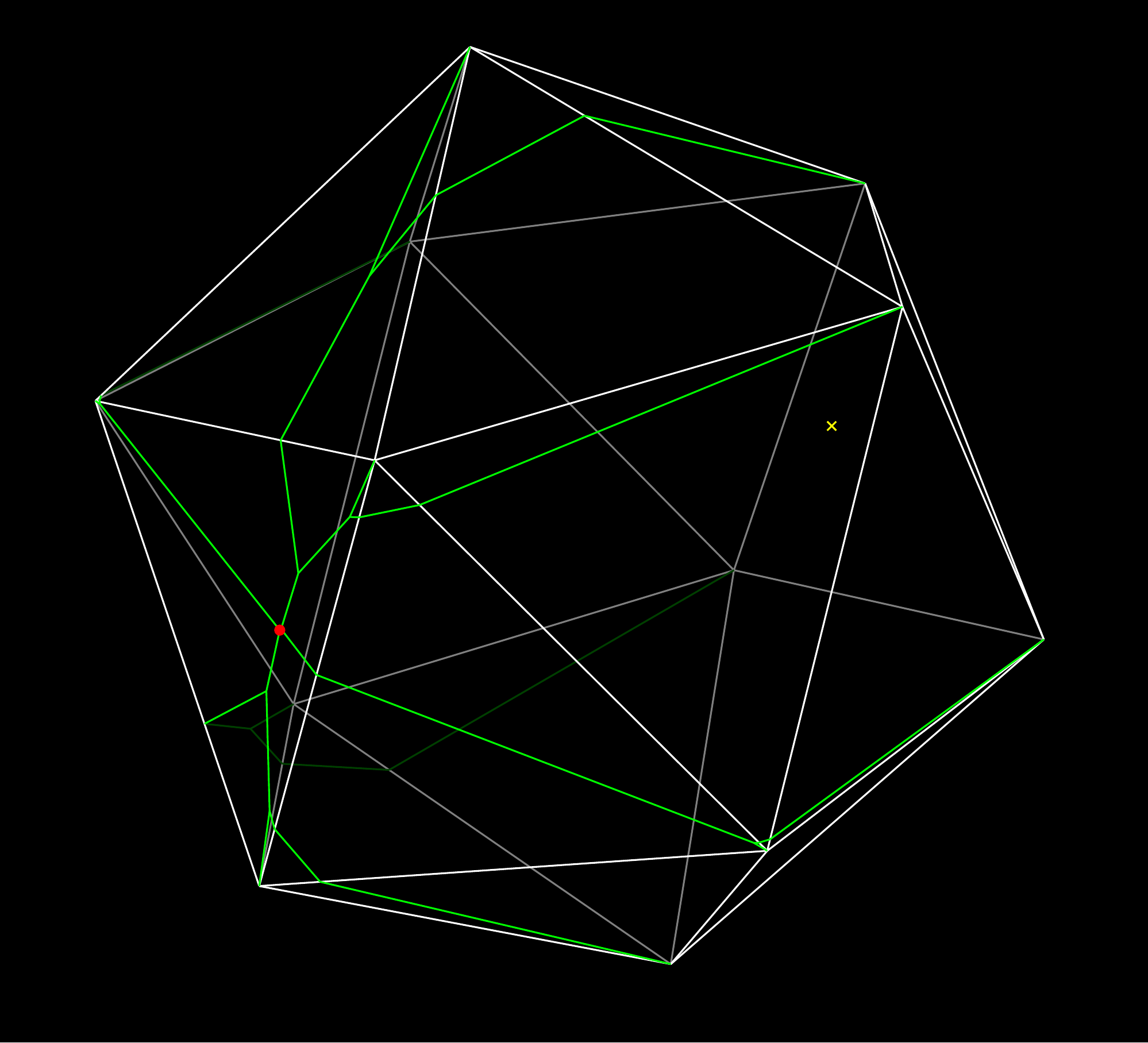}\\
	\vspace{8pt}

	\includegraphics[height=6cm, pagebox=cropbox]{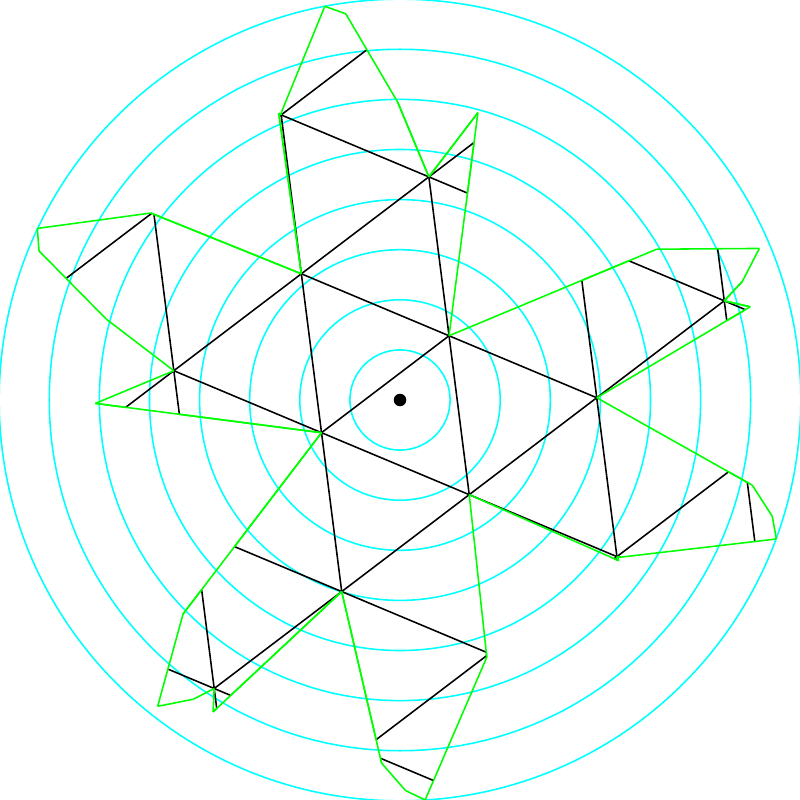}
	\caption{Source unfolding of an icosahedron. One can create a number of examples by choosing different source points.}
\end{figure}

\begin{figure}[h]
	\centering
	\includegraphics[height=5cm, pagebox=cropbox]{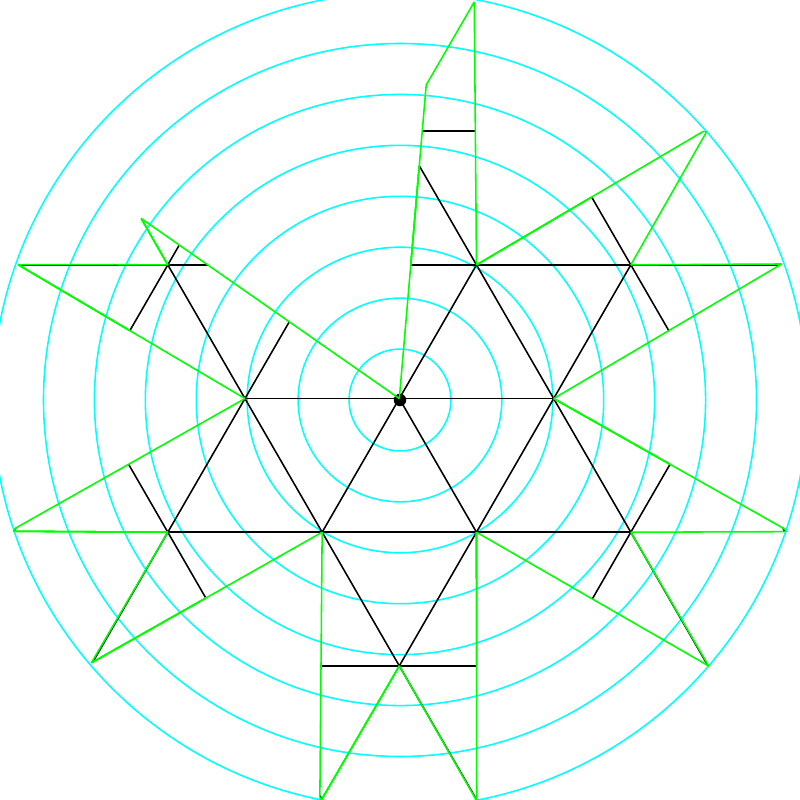}\; \;
	\includegraphics[height=5cm, pagebox=cropbox]{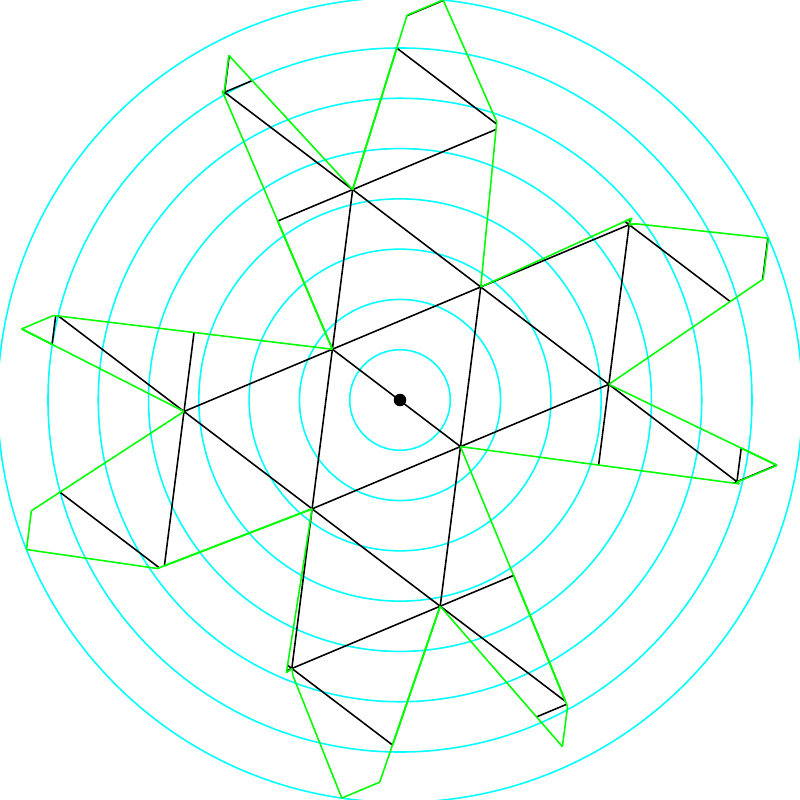}
	\caption{Different source unfoldings. Manually selecting the source point $p$ on a vertex or edge contains a small invisible error. This enables our algorithm to work and properly create the cut-locus.}
	\label{icosahedron_vertex_or_edge}
\end{figure}

The MMP algorithm aims to compute the shortest geodesic between two points; it receives $S$ and $p$ as inputs, and results a specialized data structure called {\em intervals}, which are subdivisions of all edges of $S$ equipped with some additional information.
However, the MMP algorithm itself and other existing algorithms are insufficient for our practical purpose.
We try to improve the MMP algorithm -- one of our major ideas is to introduce a new data structure, called an {\em interval loop} indexed by the parameter $r$, which is a recursive sequence of {\em enriched intervals} $I_i\, (=I_i^{(r)})$
$$\mathbf{I}_r: I_0 - I_1 - I_2 -\cdots - I_{k(r)} - I_0.$$
Roughly speaking, $\mathbf{I}_r$ is the data structure representing the wavefront $W(r)$;
each enriched interval corresponds to a circular arc participating in $W(r)$, and the sequence is closed, because $W(r)$ is assumed to be an oriented closed curve embedded in $S$. Again, since our method disallows a bifurcation event, $W(r)$ is connected and one interval loop represents all of $W(r)$.

As $r$ increases, there arise some particular moments $r_0$, e.g., a new ridge point is born from a vertex as depicted in Figure \ref{ridge_point}, a ridge point meets an edge, two or more ridge points collide on a face, and so on. We call them {\em geometric events} of the wavefront propagation.

Here we make a simplification to this event model. An interior point of an arc often reaches an edge before either of the surrounding two ridge points meets the edge, and then the arc is pushed out to the next face (or the next of the next, or so on), while the two ridge points stay on the original face. Since we are primarily interested in the ridge points, we do not recognize this phenomenon as an event, while the interval loop does not precisely represent the wavefront. This definition makes our algorithm much simpler, as we explain later (see Remark \ref{partial_propagation}). This simplification could also improve practical performance by a constant factor, which is hidden behind the big-O notation.

Until a new event occurs at the time $r_0\, (>r)$, we simply keep the same interval loop $\mathbf{I}_{r}$, and for $r'\ge r_0$, the data structure is updated to $\mathbf{I}_{r'}$ by certain manipulations with the following three steps:
\begin{enumerate}
	\item[-]
	\underline{Detection} detects the forecast events from the data $\mathbf{I}_{r}$, and updates the event queue;
	\item[-]
	\underline{Processing} deletes and inserts temporarily several intervals related to that event;
	\item[-]
	\underline{Trimming} resolves overlaps of inserted intervals, and generates $\mathbf{I}_{r'}$.
\end{enumerate}
The last step is similar as in the original MMP algorithm, while the first two steps contain several new ideas.
Detection (re-)computes the forecast events and updates the event queue, which involves insertion, deletion and/or replacement of some events. Processing produces a provisional interval loop, which may have overlapped intervals. Trimming makes it a valid interval loop in a true sense.
Our algorithm ends when $W(r)$ collapses to the farthest point or is found to be inconsistent (which occurs at some moment after an occurrence of a bifurcation event or a non-generic event).
Then we obtain the cut-locus $C$ and the distance from $p$ to every vertex passed though.
As for the computational complexity, our algorithm takes $O(n^2 \log n)$ time and $O(n)$ space, where $n$ is the number of vertices of $S$, and it can be modified to be able to find the shortest geodesic with $O(n^2)$ space (see subsection \ref{complexity}).

A particular feature of our algorithm is that, unlike the MMP algorithm, we can visualize the ongoing wavefront propagation, i.e., we compute the set $W(r)$ of points having shortest paths of length $r$ from $p$ {\em all at once}, as well as partially-constructed cut-locus during execution. As a remark, in \cite{Mount} Mount describes how to find the cut-locus $C$ by the information of obtained intervals -- for each face $\sigma$ one can detect $C\cap \sigma$ by computing an associated {\em Voronoi diagram}. However, it requires a bit heavy new task additionally to the MMP and it seems not quite obvious how to implement it to computer program which actually works. In contrast, our algorithm instantly produces the complete information of the cut-locus $C$.

In general, bifurcation events may occur, and then $W(r)$ break off into several connected components.
This phenomenon is the most difficult obstacle for tracing the wavefront propagation beyond the MMP algorithm.
In fact, our algorithm is designed to depend only on the {\em local data} (data of neighboring arcs participating in the wavefront), not {\em global data} of the wavefront, and therefore, our implemented program may stop at a certain moment after some bifurcation event actually happens. In this sense, our algorithm is surely limited. Nevertheless, it seems that there has not been known other practical approach accompanied by actual implementation, as far as the authors know.

As a generalization in different direction, we can use a modified version of intervals to build data structure from a point $p$ on a polyhedral surface and a positive real number $r$, for query of {\em enumeration of all geodesics} shorter than $r$, from $p$ to arbitrarily chosen point $q$ on the surface (this also works for non-convex case as well). This generalization will be dealt with in another paper \cite{Tateiri2}.

\section{Preliminaries}

Throughout the present paper, let $S$ be the boundary surface of a compact 3D convex polyhedral body, i.e., $S$ is a $2$-dimensional polyhedron (= the realization of a finite simplicial complex) embedded in Euclidean space $\R^3$ such that it is homeomorphic to the standard $2$-sphere and that every vertex $v$ of $S$ is {\em elliptic}, i.e.,
the sum of angles around $v$ (measured along the faces) is less than $2\pi$ \cite{Alex,DO}.
We assume that every face of $S$ is an oriented triangle so that the orientation is anti-clockwise when one sees the 3D body from outside.

A {\em path} between $p$ and $q$ (on $S$) is a piecewise linear path connecting these points on the polyhedron $S$.
Among all paths between $p$ and $q$, we can consider a shortest one (there may be multiple shortest paths between $p$ and $q$).
The length of a shortest path between $p$ and $q$ defines the {\em distance} $d(p, q)$ on $S$.
A shortest path $\gamma$ satisfies the following properties:
\begin{itemize}
	\item[-] $\gamma$ is straight on any face which it meets, and when $\gamma$ passes through an interior point of an edge, $\gamma$ will be straight on the unfolding obtained from two faces attaching the edge;
	\item[-] $\gamma$ never passes through any vertex.
\end{itemize}
A {\em geodesic} on a polyhedron $S$ is defined as a (not necessarily shortest) path satisfying the same properties as above.
Obviously, a geodesic on $S$ is a locally shortest path.

Given a convex polyhedron $S$ and a point $p \in S$, the wavefront $W(r)\, (=W(r, p))$ for $r>0$ and the cut-locus $C(S, p)$ are defined as in Introduction.
As $r$ increases, the geometric shape of the wavefront changes.

\begin{definition}\upshape
\label{geom_events}
{\bf (Geometric events)}
We define several {\em events} of the wavefront propagation at $r=r_0$ as follows:
\begin{enumerate}
	\item[(v)] a {\em vertex event} occurs when $W(r_0)$ hits a vertex of $S$;
	\item[(e)] an {\em edge event} occurs when a ridge point of $W(r_0)$ hits an edge;
	\item[(c)] a {\em collision event} occurs when multiple ridge points of $W(r)$ $(r<r_0)$ collide at once, and result in a single ridge point of $W(r')$ $(r_0\le r')$, or (a component of) the wavefront converges at the point and disappears.
	\item[(b)] a {\em bifurcation event} occurs when $W(r_0)$ intersects itself and breaks off into several pieces for $r>r_0$.
\end{enumerate}
\end{definition}

We divide edge events (e) into the following two patterns. Let $A$ and $B$ be neighboring arcs in $W(r_0)$ joined by the ridge point $a$ which hits an edge $e$. Unfold the two faces incident to $e$, and divide the plane by the line containing $e$. We set
\begin{enumerate}
	\item[(ec)] a {\em cross event}: if the centers of $A$ and $B$ are located in the same half-plane;
	\item[(es)] a {\em swap event}: if the centers of $A$ and $B$ are located in opposite half-planes.
\end{enumerate}
Furthermore, among collision events, we distinguish the following special one:
\begin{enumerate}
	\item[(cf)] the {\em final event} occurs when the wavefront reaches the farthest point and disappears.
\end{enumerate}

\begin{remark}\upshape \label{partial_propagation}
At some moment, the wavefront can be tangent to an edge at some point and go through to partially propagate to the next face. We do not include this case into the above list of geometric events, as the arc simply expands on the unfolding along the edge. In other words, our data structure does not need to be changed. As seen later (\S \ref{ispropagated}), it makes our algorithm much simpler, while our instantaneous visualization of the wavefronts does not depict this partial propagation precisely (but it does not affect the calculation of the cut-locus). By this definition, we can ensure that every interval appears exactly \textit{once} in the interval loop, and every interval with non-empty true extent (see \S \ref{basic_data_structure}) is involved in exactly \textit{two} (vertex, edge or swap) events, where it is propagated in the first one and removed in the second one. Otherwise, it requires special treatment of intervals which appear \textit{twice} in the interval loop, which also have one or more descendants. Also, they would be involved in \textit{three} events, where it is propagated in the first one and removed in the second and third ones, whereas some intervals are still involved in only two events. See also Remark \ref{delay_remark}.
\end{remark}

\begin{definition}\upshape
\label{generic}
{\bf (Generic source point)}
\begin{enumerate}[(i)]
\item We say that the source point $p$ is {\em generic} if the following three properties hold:
\begin{enumerate}[(1)]
	\item $p$ is an interior point of a face,
	\item every ridge point of $W(r)$ for any $r>0$ does not pass through vertices and does not move along an edge,
	\item every collision (including the final) event happens in the interior of a face, and the number of ridge points collide at once is three in the final event and two otherwise.
\end{enumerate}

\item When choosing $p$ to be generic, the wavefront propagation admits only geometric events as depicted in Figure \ref{genericevents}; we call them {\em generic geometric events}.
\end{enumerate}
\end{definition}

\begin{figure}[h]
	\centering
	(v) \includegraphics[height=3.25cm, pagebox=cropbox]{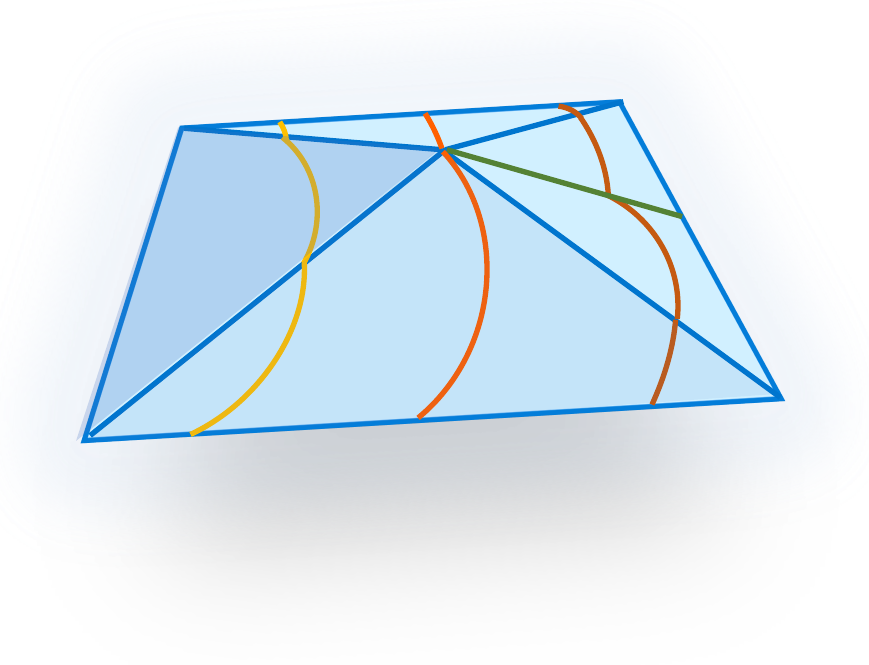}
	(ec)\includegraphics[height=3.25cm, pagebox=cropbox]{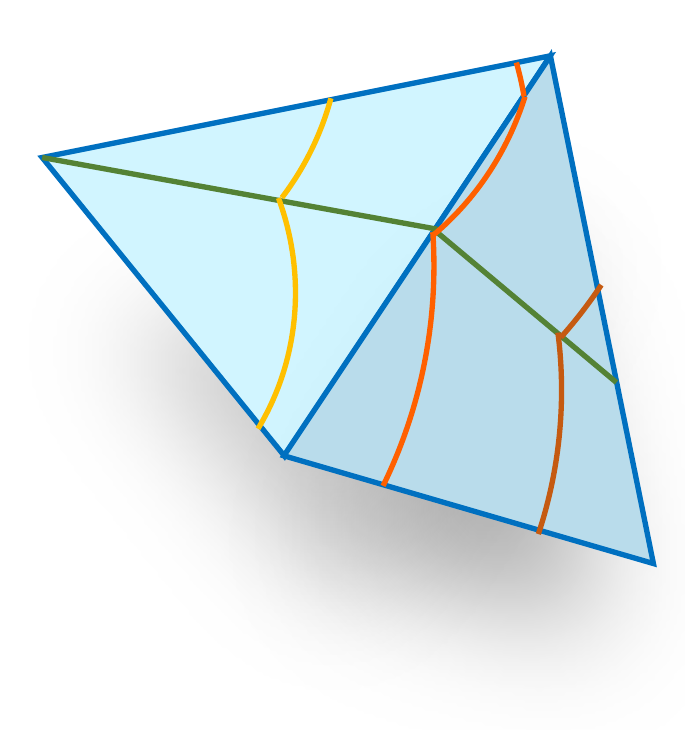}
	(es)\includegraphics[height=3.25cm, pagebox=cropbox]{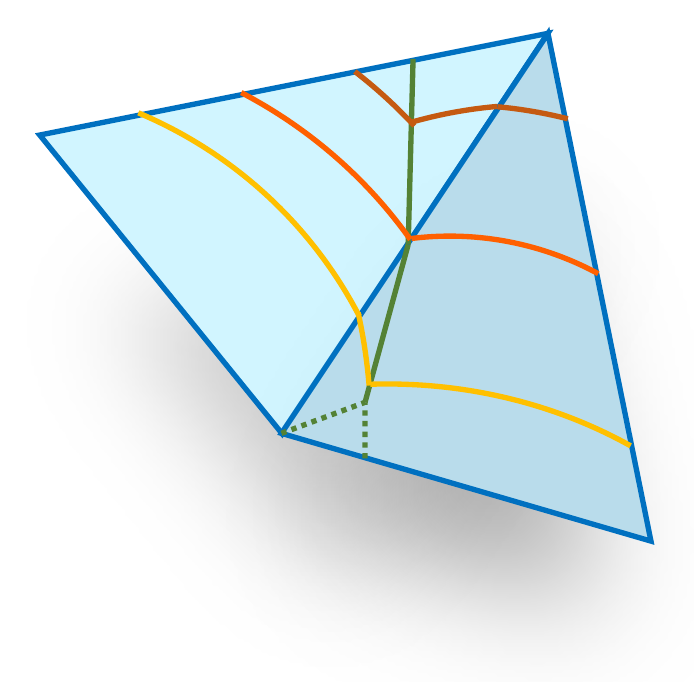} \\
	(c) \includegraphics[height=3.25cm, pagebox=cropbox]{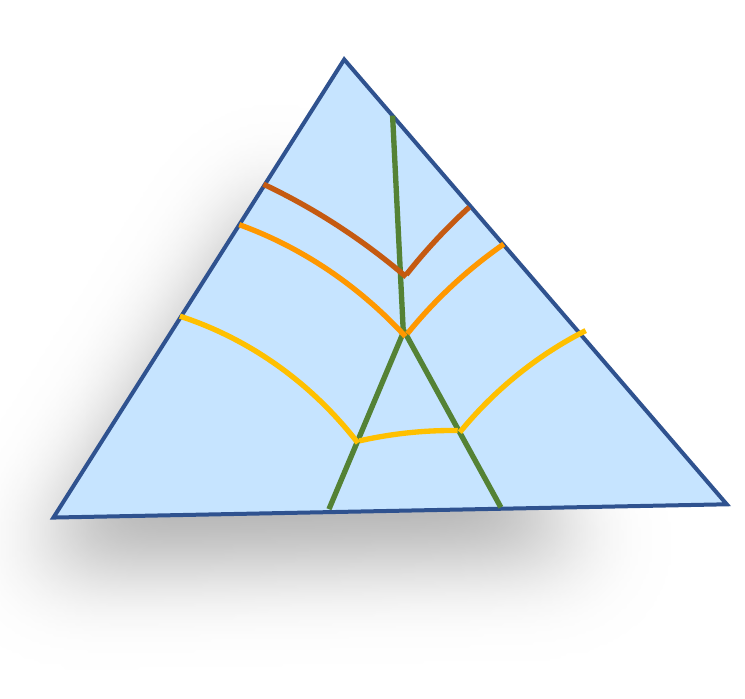}
	(b) \includegraphics[height=3.25cm, pagebox=cropbox]{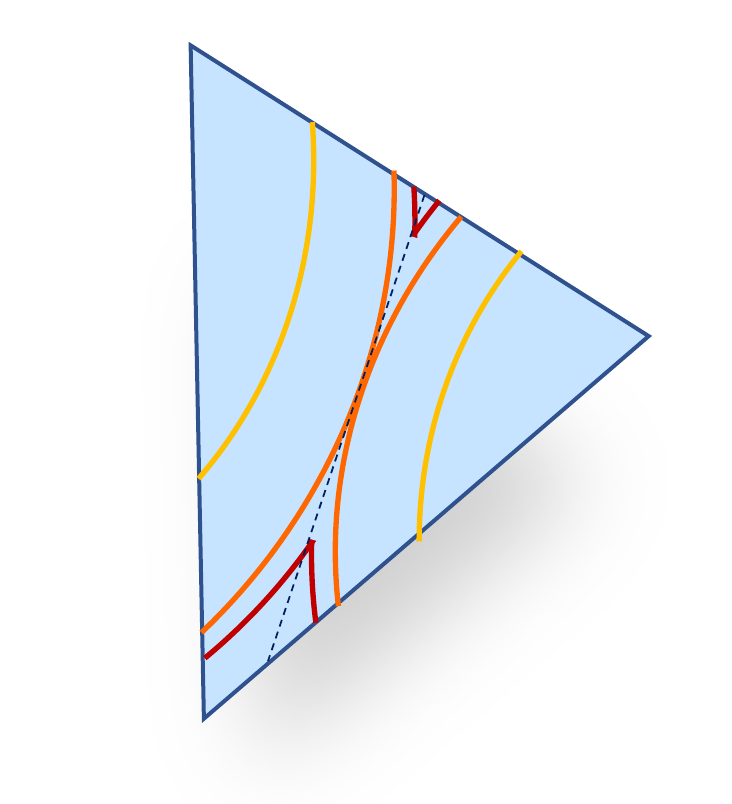}
	(cf)\includegraphics[height=3.25cm, pagebox=cropbox]{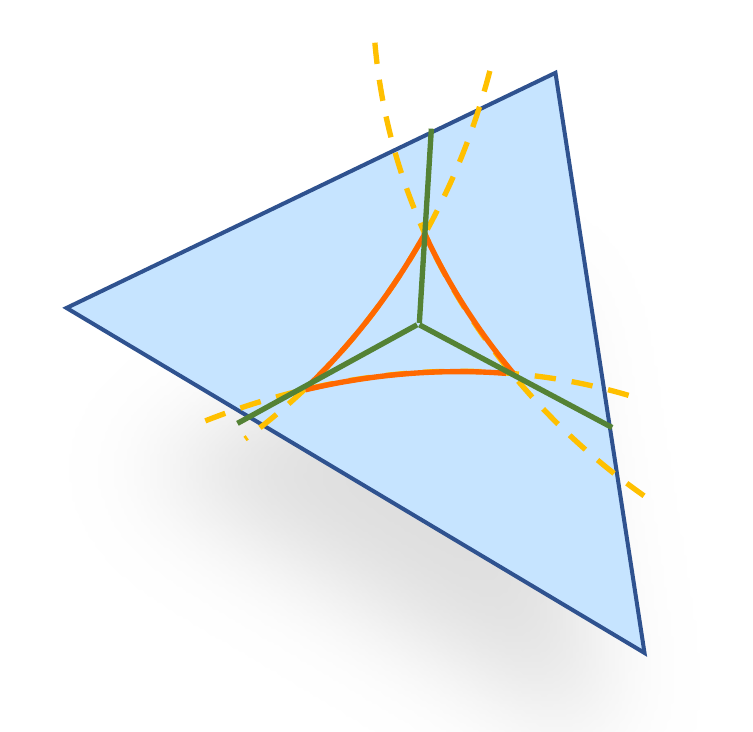}
	\caption{Generic geometric events: the wavefront propagates from the yellow one to the (dark) red one.
	(v) {\em Vertex event}: a new ridge point is created at a vertex, and no ridge point tends to a vertex as $r$ increases; (ec), (es) {\em Cross/Swap edge event}: a ridge point hits an edge (it does not move along the edge); there are two types -- two arcs come across the edge from the same side or from the opposite side; (c) {\em Collision event}: only two ridge points collide at once inside a face; (b) {\em Bifurcation event}: two local components of the wavefront get to be tangent to each other, and breaks off into two pieces; (cf) {\em Final event}: the triangle-shaped wavefront goes to shrink and disappears. }
	\label{genericevents}
\end{figure}

In this paper, as mentioned in Introduction, we consider the wavefront propagation with a generic source $p$ for the period until the final event or a bifurcation event happens.

The cut-locus $C=C(S,p)$ is a graph embedded on $S$ whose edges are linear segments.
If there happens a vertex event at a vertex $v$ with $r=r_0$, the propagation around $v$ creates the cut-locus, i.e., $W(r_0-\varepsilon)$ for small $\varepsilon>0$ on the unfolding around $v$ is locally one circular arc, while $W(r_0+\varepsilon)$ has one ridge point locally (Figure \ref{ridge_point} in Introduction). Then $v$ is an end of the cut-locus $C$. Therefore, we see that

\begin{lemma}\label{lem1}
If the source $p$ is generic and the bifurcation event does not appear during the wavefront propagation, then the obtained cut-locus $C$ is connected and has a tree structure with leaves at vertices of $S$ and nodes with degree 3, which are points at which collision events and the final event happen.
\end{lemma}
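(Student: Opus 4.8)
The plan is to read off the local structure of $C$ directly from the classification of generic geometric events and then deduce the global tree structure by orienting the edges of $C$ by increasing $r$. Throughout I use the genericity of $p$ and the standing assumption that no bifurcation event occurs.

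First I would exhibit $C$ as the union of ridge-point trajectories. Since $p$ is generic, the second condition in Definition \ref{generic} guarantees that each ridge point, once born, moves continuously along an arc that is linear on every face it meets (it is the set equidistant from the two centers of the adjacent arcs, i.e.\ a perpendicular bisector restricted to each face) and never passes through a vertex. Hence $C$ is a piecewise-linear graph whose genuine nodes are exactly the birth and death points of ridge points: the vertices of $S$ (vertex events), the collision points, and the final point. The crossing points of the edge events (ec), (es) are interior points of edges of $C$ of degree two and contribute no node. I would then record the local degrees: a vertex event (v) creates one trajectory and consumes none, so every vertex of $S$ is a leaf; by the third condition a non-final collision (c) consumes two trajectories and emits one, so each collision point has degree three; and the final event (cf) consumes three trajectories and emits none, so the final point also has degree three.

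The core of the argument is a time-orientation. I would orient each edge of $C$ from its birth node to its death node; this is well defined since $r$ increases strictly along every trajectory. In this orientation every vertex of $S$ and every collision point has out-degree exactly one (a collision has two incoming and one outgoing edges), while the final point has out-degree zero. Because no bifurcation occurs, $W(r)$ stays a single embedded loop until it vanishes, so there is \emph{exactly one} final event. Following the outgoing edge (the successor trajectory born at the collision where the current one dies) from any node produces a walk along which the node-times strictly increase; being finite it cannot cycle and must terminate at the unique node with no outgoing edge, namely the final point. Therefore every node is joined to the final point and $C$ is connected. Summing out-degrees gives $E = V - 1$, and a connected graph with $E = V - 1$ is a tree. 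Since the single loop $W(r)$ sweeps out all of $S$ before vanishing, every vertex of $S$ is reached and yields a leaf, while the degree count shows the only degree-one nodes are vertices of $S$; thus the leaves are exactly the vertices of $S$ and the remaining trivalent nodes are exactly the collision and final points, as claimed.

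I expect the main obstacle to be the first step: rigorously justifying that under the genericity hypothesis the ridge points behave precisely as in the event list — that trajectories are born only at vertices, merge only in the binary (or final ternary) fashion, that $C$ is swept out completely, and that no further degeneracies intervene. Once this local picture and the single-final-event consequence of "no bifurcation" are secured, the global tree structure follows formally from the out-degree bookkeeping. As an independent check one may note that $S\setminus C$ deformation-retracts to $p$ along the unique shortest geodesics (this is exactly the star-shaped source unfolding), so $S\setminus C$ is a disk; realizing $S = S^2$ as a CW complex with $C$ as $1$-skeleton and this single disk as its $2$-cell yields $2 = \chi(S) = (V - E) + 1$, i.e.\ $\chi(C) = V - E = 1$, which is consistent with $C$ being a tree.
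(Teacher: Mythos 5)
Your proof is correct, but be aware that the paper itself offers essentially no argument for this lemma: its entire proof, in Appendix A, is the sentence that Lemma \ref{lem1} ``is easy from the definition of generic geometric events.'' Your local analysis --- that under genericity the nodes of $C$ are exactly the vertex-event points (degree one), the collision points (degree three), and the final point (degree three), while cross/swap events contribute only degree-two interior points of edges of $C$ --- is precisely the content the paper treats as definitional (Definition \ref{generic} and Figure \ref{genericevents}). What you supply beyond the paper is the global bookkeeping: orienting trajectories by increasing $r$, noting that ``no bifurcation'' forces the wavefront to remain a single loop and hence forces a \emph{unique} final event (the only node of out-degree zero), following outgoing edges to reach that node from anywhere (connectivity), and summing out-degrees to get $E=V-1$, which together with connectivity yields the tree property and pins down the leaves as exactly the vertices of $S$. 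This is sound, and it is a genuine proof where the paper has only an assertion; the Euler-characteristic cross-check via the star-shaped source unfolding is a nice independent confirmation consistent with the paper's discussion of source unfoldings. The one caveat, which you correctly flag yourself, is that the whole argument rests on the claim that generically ridge trajectories are born, merge, and die only through the listed events; in the paper's framework that is built into the definition of generic geometric events rather than proved, so your proof is exactly as conditional as the paper's one-line version, just with the combinatorial consequences actually worked out.
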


\begin{lemma}\label{lem2}
Generic source points form an open and dense subset of $S$; the complement is the union of all edges and finitely many closed piecewise algebraic curves on $S$.
\end{lemma}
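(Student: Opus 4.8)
The plan is to show that every clause of Definition \ref{generic} fails only on an algebraic locus of codimension one, and to bound the number of combinatorial configurations that can ever contribute. First I would fix a face $F$, equip its supporting plane with Cartesian coordinates, and write the source as $p=(x,y)\in F$. For a finite edge sequence $E$ that a candidate geodesic starting in $F$ crosses, the unfolding along $E$ is a fixed rigid motion $M_E(p)=R_Ep+t_E$ of the plane, a composition of the reflections across the successive edges, depending only on $E$ and not on the position of $p$. Measuring in the plane of the terminal face, a point $x$ there satisfies $d_E(p,x)^2=|x-M_E(p)|^2$, where $M_E$ is rigid and hence affine in $p$. The decisive remark is that the purely quadratic parts (in $x$, and in $p$) are independent of $E$, so for two sequences the difference $d_E^2-d_{E'}^2$ is affine; this both reproves that cut-locus edges are straight segments (perpendicular bisectors of the developed images $M_E(p),M_{E'}(p)$) and shows that all loci we must control are cut out by linear or low-degree algebraic equations whose coefficients are algebraic functions of $p$ alone.

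With this dictionary I would translate each non-generic condition into a scalar equation on $p$. Clause (1) fails exactly on the edges of $S$, giving the ``all edges'' part of the complement. For clause (2): a vertex $w$ acquires two shortest paths, i.e. a ridge point passes through $w$, precisely when $|M_E(p)-w|=|M_{E'}(p)-w|$ for two competing sequences, one equation, hence a line in $F$; and a cut-locus segment issuing from a vertex $v$ runs along an incident edge exactly when its direction, namely the bisector of the two developed images of $p$ about $v$ (a ray through $v$, since $v$ is equidistant from those images), coincides with that edge's direction, again one angular equation. For clause (3): a collision point is the circumcenter $Q(p)$ of the triangle $M_{E_1}(p),M_{E_2}(p),M_{E_3}(p)$, an algebraic function of $p$, so ``a collision occurs on an edge'' is the vanishing of the relevant edge-line equation at $Q(p)$, while ``three or more ridge points collide at once'' (resp. the final event failing to be trivalent) is the concyclicity of four developed images $M_{E_i}(p)$, a single determinantal equation. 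Each such equation cuts out an algebraic curve in $F$ once it is not identically satisfied; the non-generic locus inside each face is therefore a closed semialgebraic set contained in finitely many such curves, and assembling these across faces yields finitely many closed piecewise-algebraic curves on $S$.

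Granting finiteness of the combinatorial data, the lemma follows: the complement of the generic locus is the union of the edges of $S$ with finitely many proper algebraic curves, hence closed and nowhere dense, so the generic locus is open and dense. The main obstacle is precisely this finiteness, i.e. a uniform bound, as $p$ ranges over $F$, on the number of edge sequences that can realize a shortest path and on the number of triples and quadruples that can produce a collision; I would derive it from the same combinatorial complexity estimates underlying the MMP framework (the $O(n^2)$ bound on intervals), applied uniformly in $p$. A secondary point needing care is the non-triviality of each equation, that is, ruling out that some degeneracy persists on a two-dimensional set of sources; I would settle this for each configuration type by exhibiting a single source at which the corresponding quantity does not vanish, using that $M_E(p)$ depends non-degenerately on $p$.
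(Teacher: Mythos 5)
Your core dictionary is the same as the paper's: develop the source through a fixed edge sequence, observe that the developed image is affine in $p$, and turn each failure of Definition \ref{generic} into an algebraic equation in the source coordinates (bisector conditions for the vertex clause, an edge-line equation evaluated at the circumcenter for collisions on edges, concyclicity of four images for higher-order collisions). The paper's Appendix does exactly this for the collision cases, computing the collision point as a rational map $(\varphi_1(s,t),\varphi_2(s,t))$ of the source. Where you genuinely diverge is in the two supporting pillars. For the vertex clause, the paper does not write per-configuration bisector equations at all: it observes that a ridge point passing through a vertex $v$ means precisely that $p$ has two shortest paths to $v$, i.e.\ $p$ lies on the cut-locus $C(S,v)$ of $v$, which is automatically a one-dimensional (piecewise linear) set; this also absorbs the ridge-along-an-edge case, since such a ridge passes through an endpoint of the edge. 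For finiteness, you propose a global, uniform-in-$p$ combinatorial bound in the spirit of MMP; the paper instead argues locally -- near any non-generic $p_0$ only the finitely many events of the propagation from $p_0$ can go wrong, so $\Gamma\cap U$ is a finite union of algebraic curves -- and then invokes compactness of $\Gamma$ (bounded and closed) to get global finiteness. Both routes are viable; yours requires justifying the uniform bound (e.g.\ via the fact that a shortest path crosses each face at most once, so only finitely many edge sequences can ever occur), while the paper's requires the closedness of $\Gamma$, which it asserts rather than proves.

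The one step of your plan that would not go through as stated is the non-triviality argument: ``exhibit a single source at which the corresponding quantity does not vanish.'' The quantity is attached to a fixed tuple of edge sequences, and there do exist tuples for which it vanishes identically -- e.g.\ if the pairwise holonomies $M_{E_i}\circ M_{E_j}^{-1}$ are rotations about a common center, the four developed images are concyclic for every $p$. What must actually be shown is that such identically degenerate tuples are never realized by shortest paths (for instance, at most two shortest paths to a point can differ by a loop around a single convex vertex, since further wrapping strictly lengthens a path); a single well-chosen source for each ``configuration type'' does not rule this out. This is a real subtlety, and it is worth noting that the paper's own proof sidesteps it for the vertex case via the cut-locus identification but leaves it implicit for the collision cases, so your proposal is not weaker than the paper here -- it just cannot be patched by the particular device you suggest.
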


Intuitively, these lemmata look almost trivial, and indeed they are checked practically by the fact that our algorithm properly works (Remark \ref{generic2}). A short proof will be given in Appendix A.

\begin{remark}\upshape \label{generic1}
If the source point $p$ is generic, the shape of the cut-locus $C(S, p)$ is stable with respect to small perturbations of $p$.
Namely, for sufficiently near generic $p'$, $C(S, p)$ and $C(S, p')$ are the same graph so that corresponding two edges have almost the same length.
Now suppose that the source point $p$ is not generic. Even though the cut-locus $C(S, p)$ exists but possesses some degenerate vertices or edges. When perturbing $p$ to a generic $p'$, such degenerate points locally break into generic geometric events as indicated in Figure \ref{genericevents}, and the new cut locus $C(S, p')$ should be sufficiently close to $C(S, p)$.
\end{remark}

\begin{remark}\upshape \label{generic2}
Theoretically, it is possible to determine whether a chosen point $p$ is generic or not, if we have an unfolding of the whole of $S$ in advance. In our specifications, however, we are not supposed to have such prior information; rather to say, as mentioned before, we are aiming to produce a nice planar unfolding.
In practice, non-generic geometric events do not occur unless we intentionally set up such input of $S$ and $p$.
\end{remark}

\begin{remark}\upshape \label{generic3}
In the contexts of differential geometry and singularity theory, wavefronts, caustics, cut-loci and ridge points on a smooth surface have been well investigated, see e.g., Arnol'd \cite{Arnold}. Our classification of generic geometric events is motivated as a sort of corresponding discrete analog.
\end{remark}

%%%%%%%%%%%%%%%%%%%%%%%%%%%%%%%%%%%%%%%%%%%%%%%%%%%%
\section{Main algorithm}

\subsection{MMP algorithm}
The MMP algorithm \cite{MMP} (and Mount's earlier algorithm \cite{Mount})
 encodes geodesics as the data structure named by {\em intervals}:
\begin{itemize}
	\item Input: a polyhedron $S$ and a source point $p$ on $S$.
	\item Output: a set of intervals for each edge, which enables us to find the shortest geodesic from $p$ to any given point $q$ on $S$.
	\item Complexity: $O(n^2 \log n)$ time, $O(n^2)$ space, where $n$ is the number of edges of $S$.
\end{itemize}
An interval $I$ is a segment, called the {\em extent} of $I$, in an edge $e$ of $S$ endowed with additional data being necessary to find the shortest path from $p$ to points of the extent.
Intervals are inductively propagated -- each interval generates a new one (its child interval) step-by-step by manipulations called {\em projection} and {\em trimming}.
The algorithm uses a priority queue to manage the order of the propagation of the intervals. The priority of an interval is the {\em shortest distance} between the source point $p$ and its extent, and any smaller value means to be propagated earlier.

\subsection{Our problem}
Our main problem is to reveal some richer structure of geodesics on $S$ by describing the wavefront propagation interactively and accurately, where we deal with not only a single geodesic from a source point $p$ but also all geodesics from $p$ at once.
At the final moment, we obtain the entire cut-locus $C(S, p)$ and the source unfolding, provided that the bifurcation event does not occur in the whole process; otherwise, our algorithm stops at some moment after that bifurcation event.

Our algorithm runs in the following time and space complexity:
\begin{itemize}
	\item Input: a convex polyhedron $S$ and a point $p \in S$.
	\item Output: the cut locus $C(S, p)$.
	\item Complexity: $O(n^2 \log n)$ time, $O(n)$ space.
\end{itemize}
Furthermore, as an option, our algorithm can also support shortest path query using extra space complexity \cite{Tateiri2};
\begin{itemize}
	\item Input: a convex polyhedron $S$ and a point $p \in S$.
	\item Output: the cut locus $C(S, p)$ and a set of intervals for each face, to be able to find shortest geodesic from $p$ to any given point $q$ on $S$.
	\item Complexity: $O(n^2 \log n)$ time, $O(n^2)$ space.
	\item Input of query: a point $q$ on $S$.
	\item Output of query: the shortest path(s) from $p$ to $q$.
\end{itemize}

\subsection{Data structure}
The wavefront $W(r)$ is an oriented closed embedded curve on $S$ consisting of {\em circular arcs} on faces.
For each circular arc $A$,
we introduce the notion of an {\em enriched interval} $I\, (=I_A)$ as a data structure to express the arc $A$ equipped with some additional data.

\begin{definition}\upshape
\label{interval}
We define an {\em enriched interval} $I$ as a data structure shown in Table 1. Each item is denoted by $I$.$[--]$ for notational convention.

\begin{table}[h]
\centering
\begin{tabular}{l | l}
\hline
	$I.\Face$ & the oriented face $\sigma$ which contains the arc $A$ \\
	$I.\Center$ & the center $p_A$ of the arc $A$ on the plane $H_\sigma$ containing $\sigma$ \\
	$I.\Edge$ & the oriented edge $e$ of $\sigma$ into which $A$ is projected from $p_A$ \\
	$I.\Extent$ & the (foreseen) extent $e_A$ associated with $A$\\
	$I.\Prev$ & the enriched interval associated with the previous arc connecting to $A$\\
	$I.\Next$ & the enriched interval associated with the next arc connecting from $A$ \\
	$I.\Ridge$ & the ridge point to which $A$ is adjacent as the start point \\
	$I.\Parent$ & the enriched interval which generates $I$\\
\hline
\end{tabular}
\vspace{12pt}\\
\caption{An enriched interval $I$}
\label{table1}
\end{table}
We also define an {\em interval loop}
$$\mathbf{I}_r=\left\{ I_0^{(r)}, I_1^{(r)}, \cdots, I_{k(r)}^{(r)}\right\}$$
to be a finite sequence of enriched intervals that satisfy
\begin{center}
$I_i^{(r)}.\Prev=I_{i-1}^{(r)}$ and $I_i^{(r)}.\Next=I_{i+1}^{(r)}$
\end{center}
for $0 \le i \le k(r)$, where we put $I_{k(r)+1}^{(r)}:=I_0^{(r)}$ and $I_{-1}^{(r)}:=I_{k(r)}^{(r)}$.
\end{definition}

An enriched interval is similar but different from the notion of an {\em interval} used in Mount's algorithm \cite{Mount} and the MMP \cite{MMP}. Main differences are, e.g.,
\begin{itemize}
\item[-] all enriched interval in the wavefront make up an interval loop;
\item[-] an interval loop is a circular doubly-linked list: each enriched interval has the previous and the next interval corresponding to adjacency of arcs and orientation of the wavefront;
\item[-] our enriched interval depends on $r$;
\item[-] an enriched interval may have the {\em empty} extent with non-trivial additional data.
\end{itemize}

Each item in Table \ref{table1} in Definition \ref{interval} depends on $r$;
those are created at the time when the corresponding arc $A$ is born,
and are valid until $A$ disappears.
In particular,
\begin{itemize}
\item[-] the data in $I.\Face$, $I.\Center$, $I.Edge$ and $I.\Parent$ are fixed when $A$ is born;
\item[-] the data in $I.\Extent$, $I.\Prev$, $I.\Next$ and $I.\Ridge$ are updated at every moment where some geometric event involving $A$ happens.
\end{itemize}
Below we explain the meaning of each item in Table \ref{table1}.

\subsubsection{An arc}
To begin, let $r_1>0$ be fixed.
Suppose that a circular arc $A$ participating in $W(r_1)$ lies on a face (oriented triangle) $\sigma$ of $S$.
Let $H_\sigma$ denote the affine plane containing $\sigma$ in $\R^3$, then there is a unique point $p_A \in H_\sigma$ such that $A$ is an arc in the circle on $H_\sigma$ centered at $p_A$ with radius $r_1$.
Take a point $q \in A$ and the shortest path $\gamma$ on $S$ from $p$ to $q$.
We find an unfolding of $S$ expanded on $H_\sigma$, on which $\gamma$ is represented by a line segment, as shown in Figure \ref{net}. The 3D coordinates of the point $p_A$ is explicitly obtained from the 3D coordinates of $p \in S$ by inductively operating certain rotations of $\R^3$ along edges which $\gamma$ intersects.

\begin{figure}[h]
	\centering
	\includegraphics[height=4cm, pagebox=cropbox]{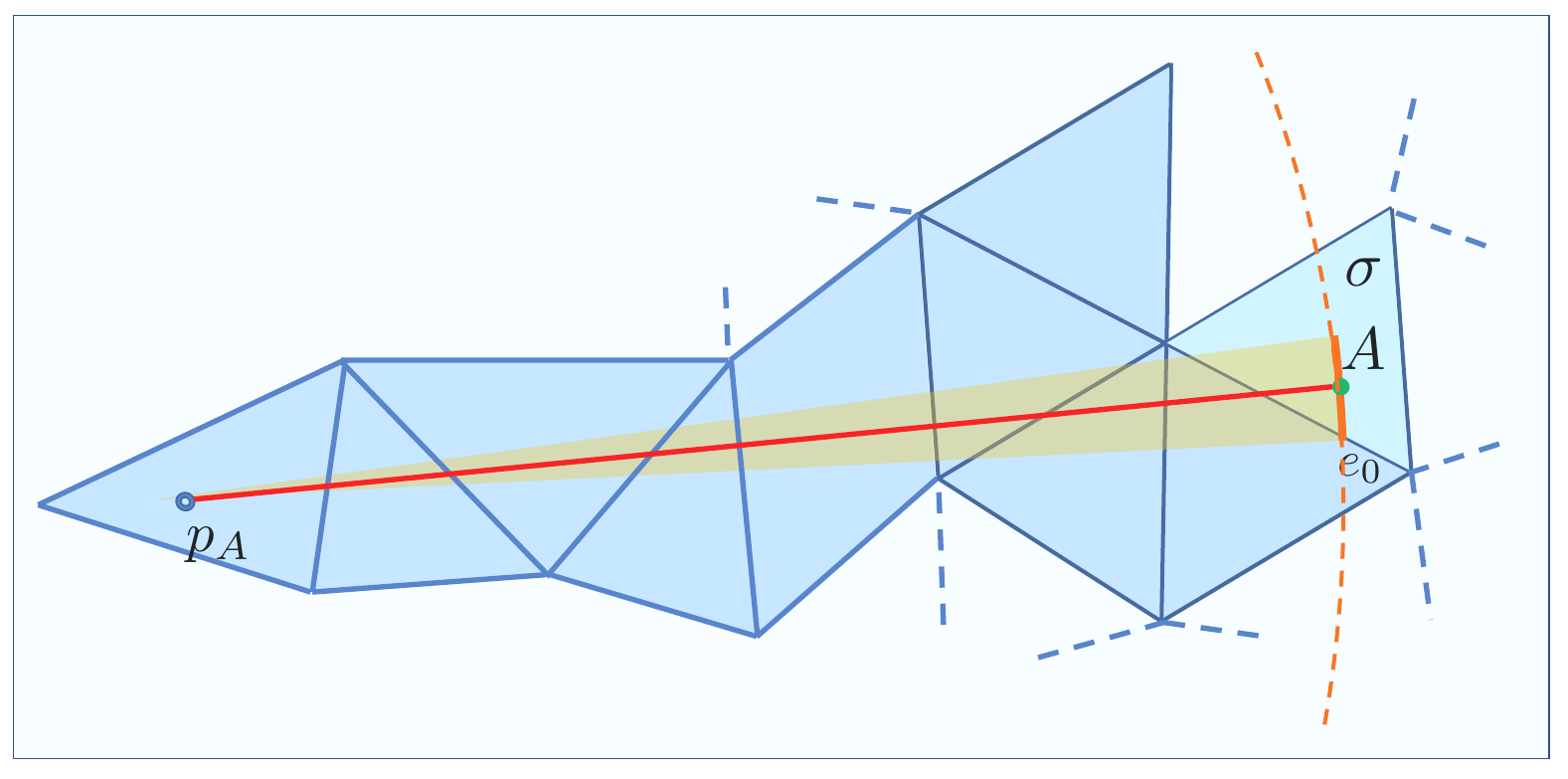}
	\includegraphics[height=4cm, pagebox=cropbox]{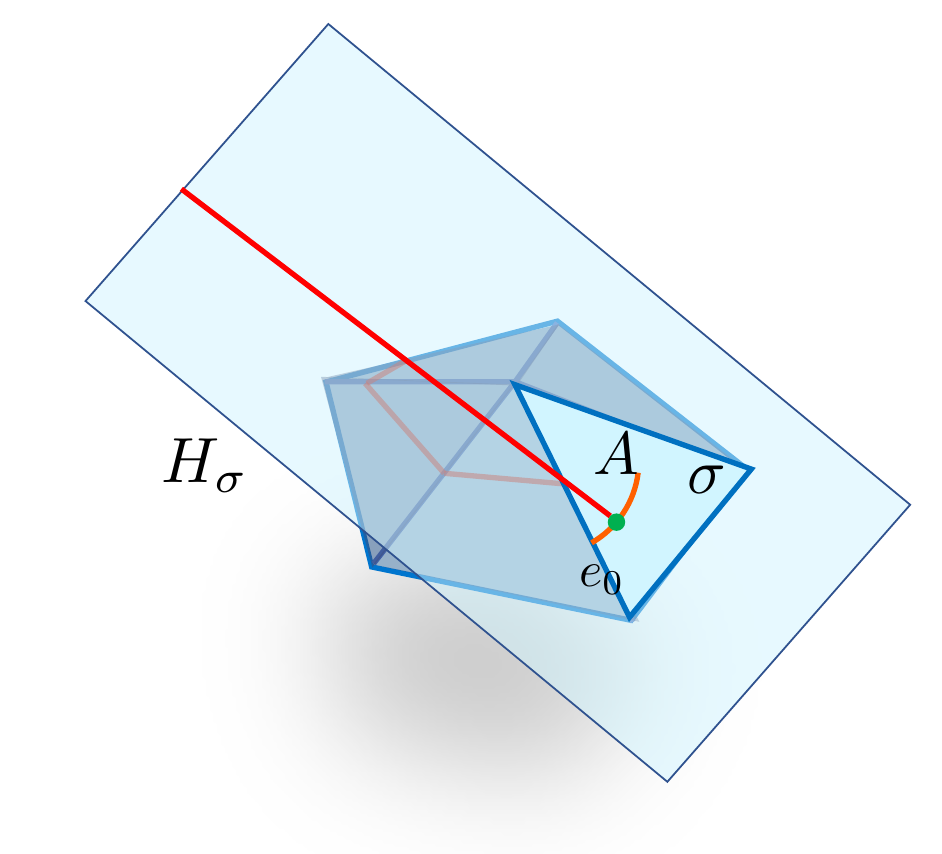}
	\caption{A circular arc on a face $\sigma$ (the left depicts an unfolding on the plane $H_\sigma$). }
	\label{net}
\end{figure}

\subsubsection{Basic data structure for an arc}
Suppose that $p_A$ is outside $\sigma$, and only one edge of $\sigma$, say $e_0$, cuts any segments between $p_A$ and points of $A$. The rays from $p_A$ to the arc $A$ meet another edge of $\sigma$ in opposite side of $e_0$ with respect to the location of $A$.\label{basic_data_structure}
\begin{enumerate}
\item Suppose that $I = I_A$ is projected from the center $p_A$ to only one edge $e$ and yields $I_1$ (see the left picture of Figure \ref{projection}).
The direction of $e$ is chosen to be compatible with the orientation of $A$. We first define the {\em true extent} associated with $A$ by the subrange $e_A \subset e$ which $A$ will actually pass through, see Figure \ref{extent}.
It can be the empty set (see the right picture of Figure \ref{extent}).
Notice that the true extent $e_A$ is fixed after all the events involving $A$ have occurred.
Therefore, in the middle of the process, what we can do is only to provisionally find a {\em foreseen extent}, which we denote by $\tilde{e}_A$, and update it just after the next event happens (indeed, this procedure is the heart of our algorithm, which will be described in detail later in the following sections).
For now, we put
$$I=I_A:=(\sigma, e, p_A, \tilde{e}_A),$$
where each is referred to as
$$I.\Face = \sigma, \;\; I.\Edge=e, \;\; I.\Center = p_A, \;\; I.\Extent = \tilde{e}_A$$
and we will append some additional data to this data structure and use the same notation $I$ or $I_A$; we call it an {\em enriched interval} or simply {\em interval}.
\item
Suppose that $I = I_A$ is projected from $p_A$ to two edges $e_1$ and $e_2$
whose order is determined by the orientation of $\sigma$ and yields $I_1$ and $I_2$ (see the right picture of Figure \ref{projection}).
Then $A$ is divided into two pieces, say $A_1, A_2$, projected into $e_1, e_2$, respectively.
If each sub-arc $A_i$ has the non-empty foreseen extent, $\tilde{e}_{A_i}\not=\emptyset \subset e_i$ ($i=1,2$),
then we associate to $A$ an ordered pair of two enriched intervals
$$I_1-I_2 \quad \mbox{with} \;\; I_i=I_{A_i} :=(\sigma, e_i, p_A, \tilde{e}_{A_i}) \;\;\;\; (i=1,2).$$
We say that $I_1$ and $I_2$ are twins.
\item
Suppose that the source point $p$ is an interior point of $\sigma$ with edges $e_0, e_1, e_2$ (anti-clockwise).
We then define the {\em initial loop} by a triple of enriched intervals
$$I_0-I_1-I_2-I_0 \quad \mbox{with} \;\; I_i :=(\sigma, e_i, p, e_i) \;\;\;\; (i=0,1,2).$$
\end{enumerate}

\begin{figure}
	\centering
	\includegraphics[height=4cm, pagebox=cropbox]{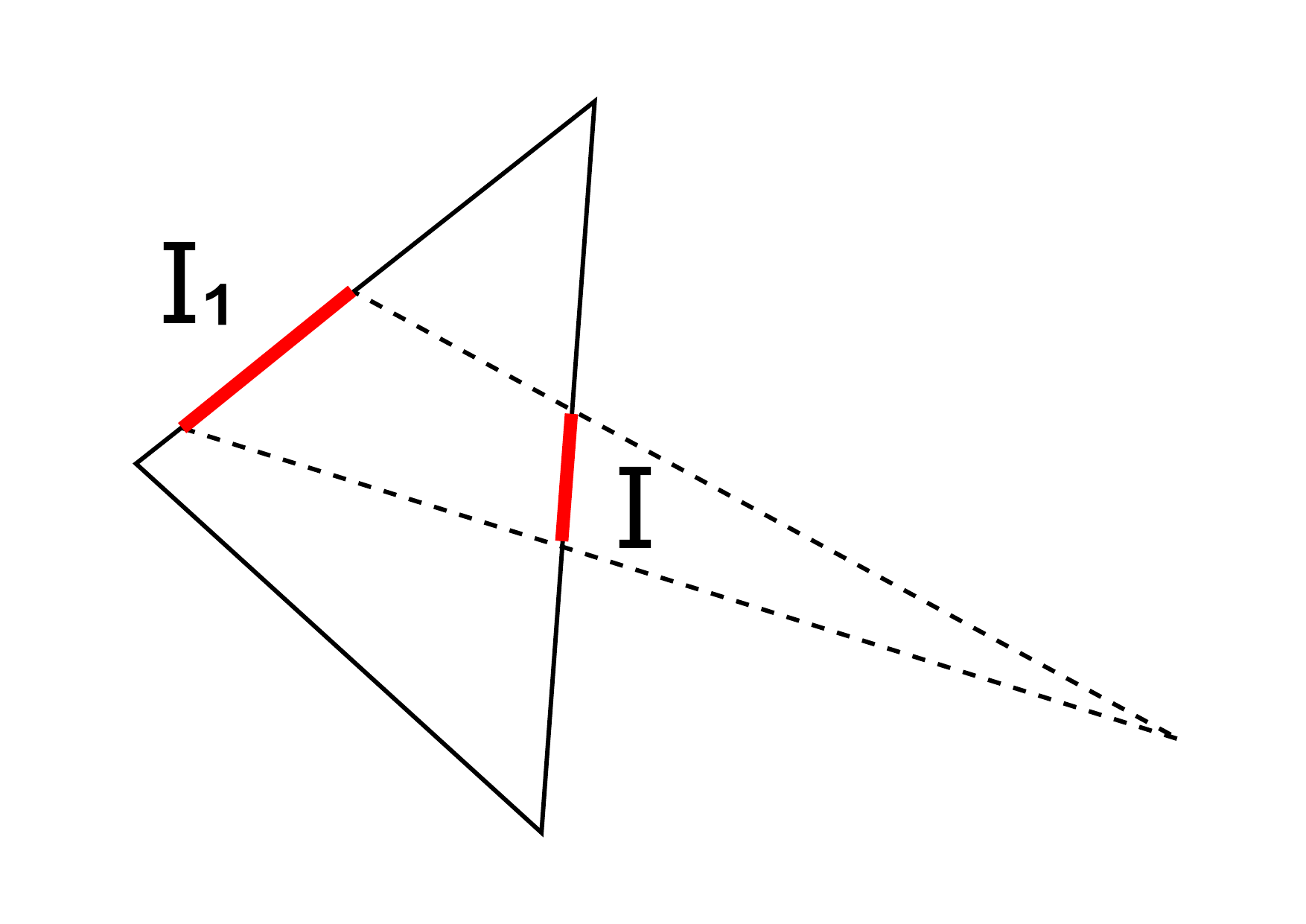}
	\includegraphics[height=4cm, pagebox=cropbox]{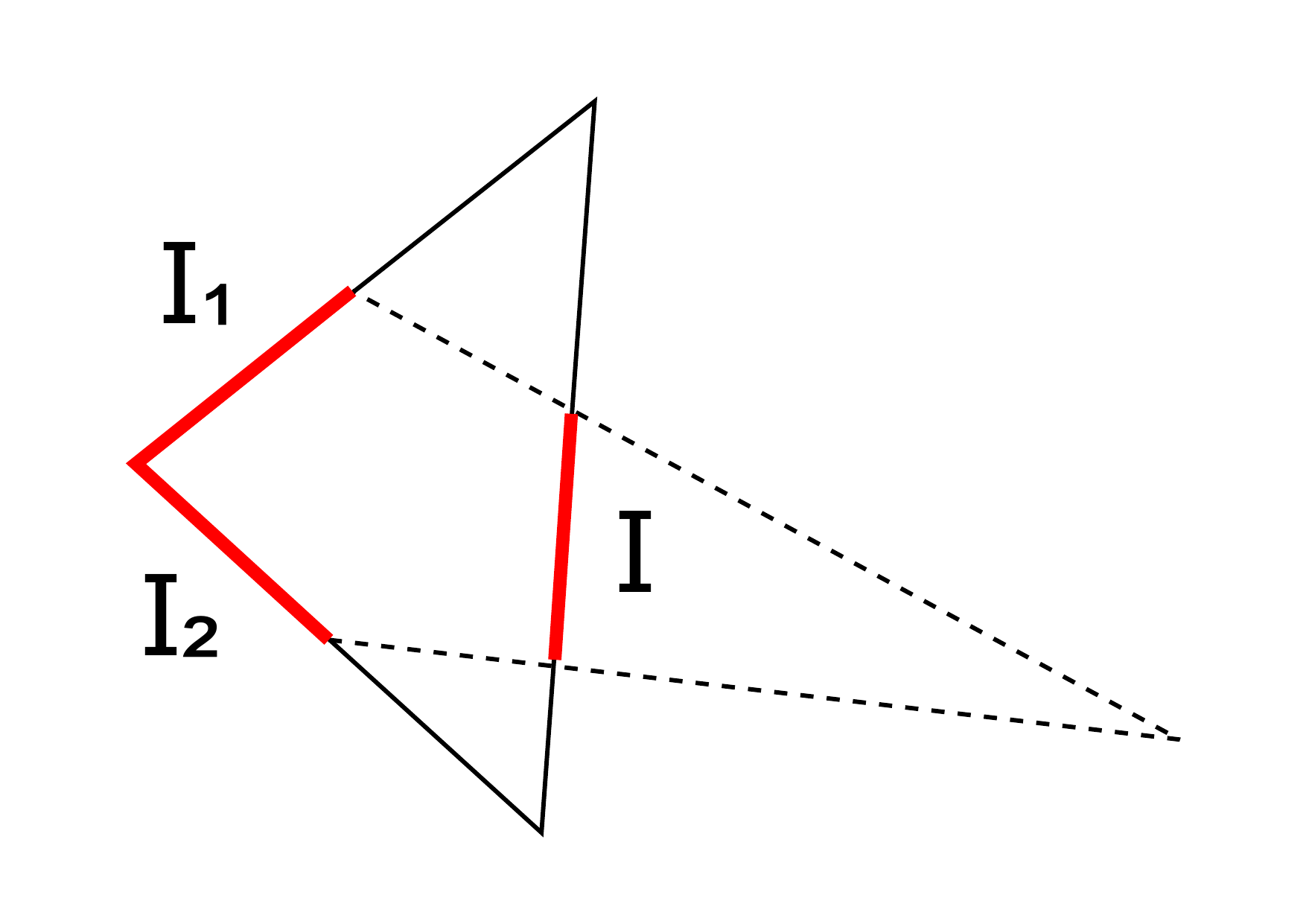}
	\caption{Interval $I$ is projected from $I.\Center$ to opposite edges. }
	\label{projection}
\end{figure}

\begin{figure}
	\centering
	\includegraphics[height=3.75cm, pagebox=cropbox]{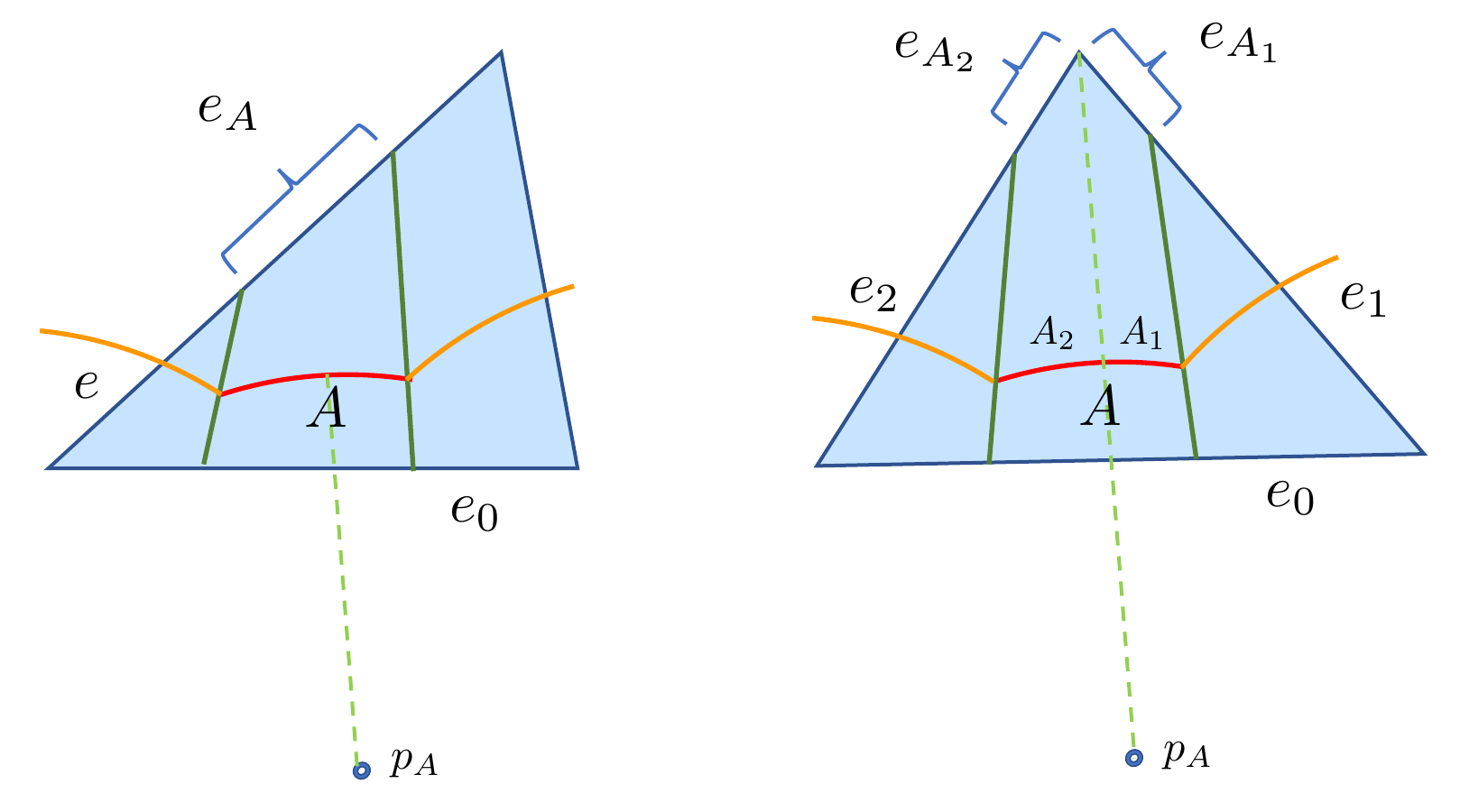}\qquad
	\includegraphics[height=3.75cm, pagebox=cropbox]{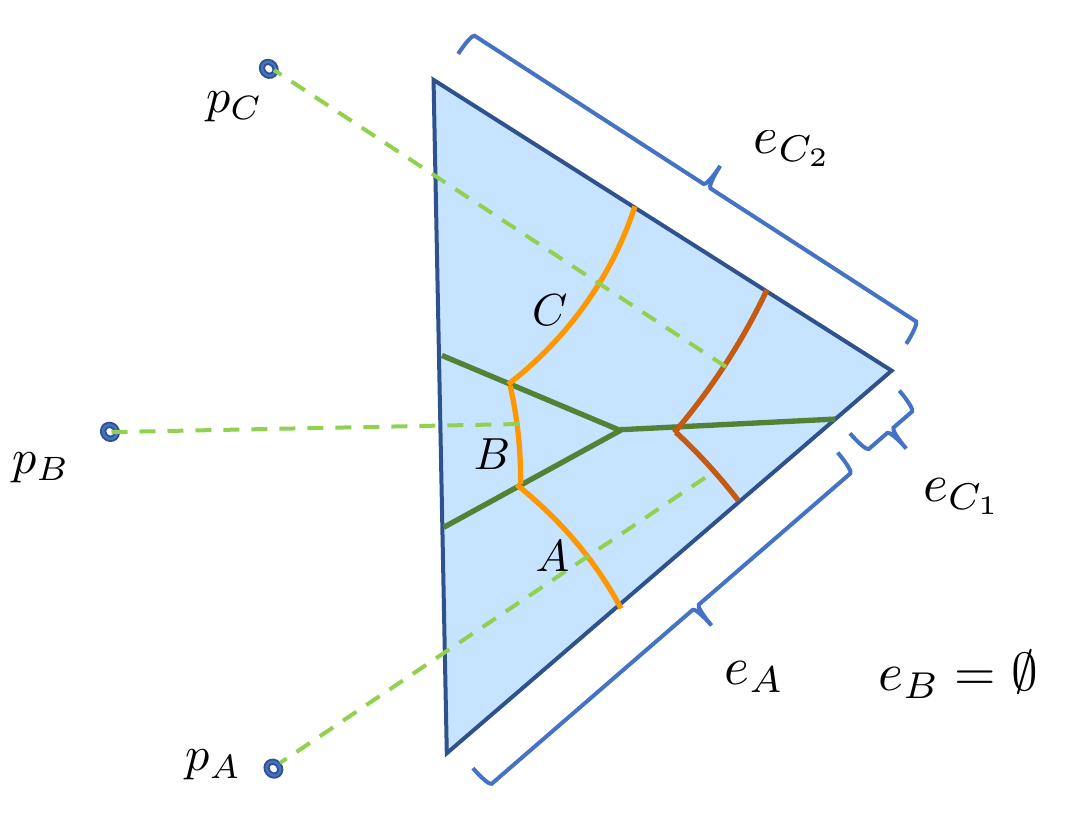}
	\caption{True extents}
	\label{extent}
\end{figure}

\subsubsection{Parent of an arc} \label{parent}
Let $I_A$ be an enriched interval associated with an arc $A$ (or one of twins) in a face $\sigma$.
If the true extent is non-empty, then the arc $A$ will pass through the extent and go into the next face $\tau$ sharing the edge with $\sigma$.
Let $A'$ be the new resulting arc in $\tau$, and $I_{A'}$ the corresponding interval for $A'$
(or $I_{A'_1}-I_{A'_2}$ if $A'$ has two associated enriched intervals).
We say that $I_A$ is propagated to $I_{A'}$, and also call $I_A$ its {\em parent}:
$$I_{A'}.\Parent:=I_A, \qquad (\mbox{or}\;\; I_{A'_i}.\Parent:=I_A\;\; (i=1,2)).$$
Throughout this paper, $A', B', B'', \cdots$ mean the resulting arcs to which their parents $A, B, B', \cdots$, respectively, are propagated.

This item is used as follows. For instance, if an enriched interval $I$ satisfies
$$I.\Prev = I.\Parent,$$
then we understand that the parent is non-empty, say $A$, and
the arc represented by $I$ is nothing but the resulting arc $A'$ to which $A$ is propagated.
Here, $I.\Ridge$ must be empty, for the endpoint of $A'$ is not a ridge point.
If neighboring intervals $I-J$ have the same parents
$$I.\Parent=J.\Parent,$$
then they are twins (i.e., $I=I_1$ and $J=I_2$).

\subsubsection{Previous/next arcs and ridge points}
An arc $A$ connects with two other arcs in the same wavefront; according to the orientation of the wavefront,
let $B$ be the previous arc and $C$ the next one. Then, for their intervals, we write
$-I_B-I_A-I_C-$ and set
$$I_A.\Prev:=I_B, \;\; I_A.\Next:=I_C.$$
Here $B, A$ or $A, C$ can be a twin.
Let $a$ and $c$ be the joint point $B \cap A$ and $A \cap C$, respectively.
We make a convention that any information of $a$ (resp. $c$) will be appended to and stored in $I_A$ (resp. $I_{C}$).
For instance, if $a$ is a ridge point, we set
$$I_A.\Ridge = a.$$
If not, this item is \textit{Nil}.
When some events happen, items $I.\Prev$, $I.\Next$ and $I.\Ridge$ may be updated.

We recall that there are two patterns of edge events as noted in \S 2; let $I_A-I_B$ be neighboring intervals where arcs $A$ and $B$ are joined by a ridge point $a=I_B.\Ridge$ which hits an edge. Then the two patterns can easily be distinguished as follows:
\begin{itemize}
	\item \underline{\em Cross event}: both extents of $I_A$ and $I_B$ are non-empty;
	\item \underline{\em Swap event}: one of the extents of $I_A$ or $I_B$ is empty.
\end{itemize}
Furthermore, we call a swap event to be of type CW (resp. CCW) if $I_A.\Extent$ (resp. $I_B.\Extent$) is empty (clockwise/counter-clockwise). Note that by definition, at least one extent is non-empty.

\subsection{Our algorithm}
A basic idea is to express the wavefront propagation $W(r)$ by updating interval loops step-by-step
$$
\mathbf{I}_{r_0} \Longrightarrow \mathbf{I}_{r_1} \Longrightarrow \mathbf{I}_{r_2} \Longrightarrow \cdots \;\;\;
(0=r_0<r_1<r_2<\cdots).
$$

\begin{enumerate}
\item {\bf Initial loop}:
Suppose that $p$ is in the interior of $\sigma$. Then $\mathbf{I}_0$ is given by $I_0 - I_1 - I_2 - I_0$ of directed edges of $\sigma$.
\item {\bf Events}:
Generic geometric events in \S 2 are interpreted as change of the data structure of enriched intervals, named {\em events}.

In our algorithm, we append two more items to the data structure of each $I$ (Table \ref{table2}).
Since an interval $I$ has at most one associated forecast event, we store it as $I.\Forecast$ if it exists.
Also we use another new item $I.\IsPropagated$ to ask whether the arc has been propagated or not (see \S \ref{processing}).
The default of $I.\Forecast$ is \textit{Nil} (indicating it does not exist), and that of $I.\IsPropagated$ is \textit{No}.

\begin{table}[h]
\centering
\begin{tabular}{l | l}
\hline
	$I.\Forecast$ & the forecast event for $I$ if exists; otherwise, \textit{Nil}\\
	$I.\IsPropagated$ & Yes/No for the inquiry whether or not $I$ has been propagated\\
\hline
\end{tabular}
\vspace{12pt}\\
\caption{Additional items in an interval $I$.}
\label{table2}
\end{table}

\item {\bf Manipulation}:
Each $\mathbf{I}_{r_j}$ is updated to $\mathbf{I}_{r_{j+1}}$ at an event -- some intervals in $\mathbf{I}_{r_j}$ are removed, some new intervals are inserted, and items of remaining intervals are updated.
Each event activates three editing processes named by
{\em Detection}, {\em Processing} and {\em Trimming}.
The algorithm halts when the final event occurs and the wavefront disappears, or when some exception arises, e.g., a bifurcation event or non-generic event happens.
\end{enumerate}

Pseudocode of our algorithm is described in Algorithm \ref{algo1}. Each process will be described below.

{\footnotesize
\begin{algorithm}[h]
\caption{(Main algorithm)} \label{algo1}
\begin{algorithmic}
	\State Create the initial interval loop
	\While {the wavefront exists}
		\For {each interval $I$ whose adjacency changed in the previous step}
			\State Trim temporary extents of $I$ and remove redundant intervals
			\State Detect the forecast and update the event queue
		\EndFor
		\State Process the top-most event in the event queue
	\EndWhile
\end{algorithmic}
\end{algorithm}
}

\subsection{Event Detection} \label{detection}
Suppose that we have an interval loop $\mathbf{I}_{r_j}$ which represents the wavefront $W(r_j)$.
Let $I=I_A$, $I_{A_1}$ or $I_{A_2}$ be an enriched interval belonging to it, associated with an arc $A$ or twin sub-arcs on a face $\sigma=I_A.\Face$.

\subsubsection{Detecting the earliest event for an arc}
We first detect which geometric event for $A$ will happen {\em without any consideration on the propagation of other arcs}.

\begin{itemize}
	\item \underline{\em Vertex event} \\
	By the existence of a pair of twins $I_{A_1}$ and $I_{A_2}$, we foresee that a vertex event will happen at the vertex they meet. We store the event in $I_{A_2}$.
	\item \underline{\em Collision event} \\
	In case that consecutive $I_A$, $I_B$ and $I_C$ have the same face and $I_B.\Extent$ is empty, we foresee that the arc $B$ will collapse and a collision event will happen at the equidistant point (circumcenter) from three points, $I_A.\Center$, $I_B.\Center$ and $I_C.\Center$. We store the event in $I_B$.
	\item \underline{\em Cross event} \\
	In case that consecutive $I_A$ and $I_B$ have the same edge and both extents are not empty, we foresee that the ridge point between them hits the edge and a cross event will happen at the point $I_A.\Extent$ and $I_B.\Extent$ meet. We store the event in $I_B$.
	\item \underline{\em Swap event} \\
	In case that an interval $I$ has the empty extent and the next or previous interval is its parent, we foresee a swap event will happen. There are two types of swap events, CW and CCW, depending on its parent is the previous or next of $I$. We store the event in $I$.
\end{itemize}
For each event, the predicted time can exactly be calculated from the point at which the event will happen.

\subsubsection{Priority queue} \label{queue}
Since the earliest forecast event cannot be modified by other events and occurs next for certain, we use a priority queue to schedule all forecast events by their predicted times and choose the earliest one to be processed; we call it the {\em event queue}.

A pseudocode for Detection of the forecast for each enriched interval of the loop is described in Algorithm \ref{algo2}.
Referring to this queue, we can find the earliest event among the forecast events of all enriched intervals belonging to $\mathbf{I}_{r_j}$.

{\footnotesize
\begin{algorithm}[h]
\caption{(Detection at each enriched interval $I$)}\label{algo2}
\begin{algorithmic}
	\If {$I.\Extent = \emptyset$}
		\If { $I.\Prev.\Face = I.\Face = I.\Next.\Face$}
			\State {A collision event will occur; calculate the predicted time}
		\EndIf
		\If {$I.\Prev = I.\Parent$ and $I.\Next.\Extent\not=\emptyset$}
			\State {A CW swap event will occur; calculate the predicted time}
		\EndIf
		\If {$I.\Next=I.\Parent$ and $I.\Prev.\Extent\not=\emptyset$}
			\State {A CCW swap event will occur; calculate the predicted time}
		\EndIf
	\Else {\quad (i.e. $I.\Extent \not= \emptyset$)}
		\If {$I$ and $I.\Prev$ are twins}
		\State {A vertex event will occur; calculate the predicted time}
		\EndIf
		\If {$I.\Edge=I.\Prev.\Edge$}
		\State {A cross event will occur; calculate the predicted time}
		\EndIf
	\EndIf
\end{algorithmic}
\end{algorithm}
}

\subsection{Event Processing} \label{processing}
As the result of Detection process, now we have the earliest forecast event of the loop $\mathbf{I}_{r_j}$.
First we delete several intervals of $\mathbf{I}_{r_j}$ involved in that event, and then create and insert new intervals,
and make some changes of items in remaining intervals of $\mathbf{I}_{r_j}$.
Below we describe Processing for each type of events in typical situations (in fact, it is often to need to consider several divided cases, but we avoid a messy description here).

\subsubsection{Recognition of propagated arcs}\label{ispropagated}
If an arc $A$ has a non-empty true extent, $e_A\not=\emptyset$, then let $\xi_A$ be the one closer to $A$ of the two endpoints of $e_A$ (if both endpoints have equal distance, take one of them).
Afterwards, just when $A$ arrives at the point $\xi_A$, some event (vertex, cross/swap) happens at that point and $A$ starts to propagate to the next face. At this moment, we update the item $I_A.\IsPropagated$ to be Yes (from No, the default) and create a new interval $I_{A'}$ representing the resulting arc $A'$ and insert it to the interval loop. Here, of course, it can happen that $A$ starts to propagate to multiple faces, and it creates multiple new intervals, say, $I_{A'}, I_{A''}, \cdots$. Afterwards, the arc $A$ arrives at the other endpoint of $e_A$ and some other event happens at that point. At this moment we recognize that all the propagation of $A$ has been done; namely, we remove $I_A$ from the interval loop.
We remark again that we do not take any attention to a `partial propagation' caused by tangency of $A$ with some edge (Remark \ref{partial_propagation}). That makes the algorithm much simpler.

\subsubsection{Temporary extents}\label{temp_exptent}
In Processing at an event, each newly-created interval $I_A$ may be assigned incomplete data at some items.
For instance, if an arc $A$ is resulted by propagating an arc (= the parent of $A$),
the data structure $I_A$ is created at that moment and the item $I_A.\Extent$ is temporarily filled in by the image of its parent's extent $I_A.\Parent.\Extent$ via the projection from the center $p_A=I_A.\Center$ (Figure \ref{temporary_extents}).
Such a temporary extent for $A$ may have overlaps with extents of previous/next intervals.
The next process Trimming corrects the overlaps and produces a foreseen extent $\tilde{e}_A$ (\S \ref{trim}).
Afterwards, at every event which is related to $A$, $I_A.\Extent$ is updated by new $\tilde{e}_A$,
and finally, if no more related event occurs, then the latest $\tilde{e}_A$ means the true extent $e_A$. In Figure \ref{temporary_extents_b}, we explain a consecutive process updating the extents; the detail of manipulation at each event will be described below.

\begin{figure}
	\centering
	\includegraphics[height=5cm, pagebox=cropbox]{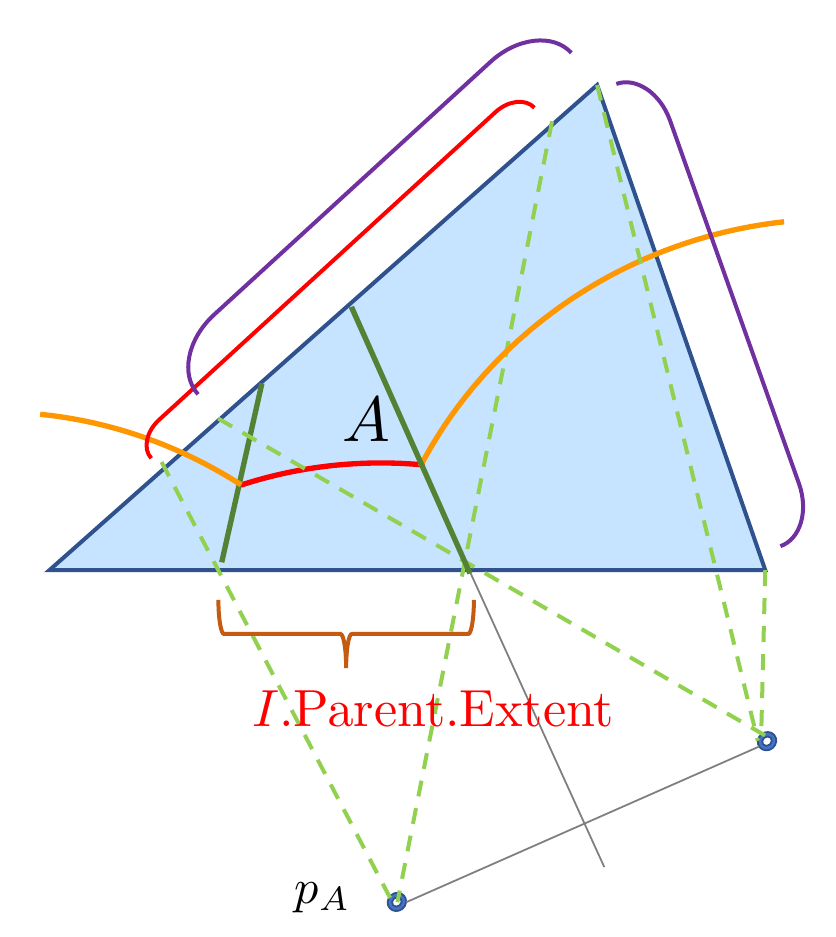}
	\caption{Temporary extents in Processing; overlaps are resolved in Trimming}
	\label{temporary_extents}
\end{figure}

\begin{figure}
	\centering
	\includegraphics[height=4.75cm, pagebox=cropbox]{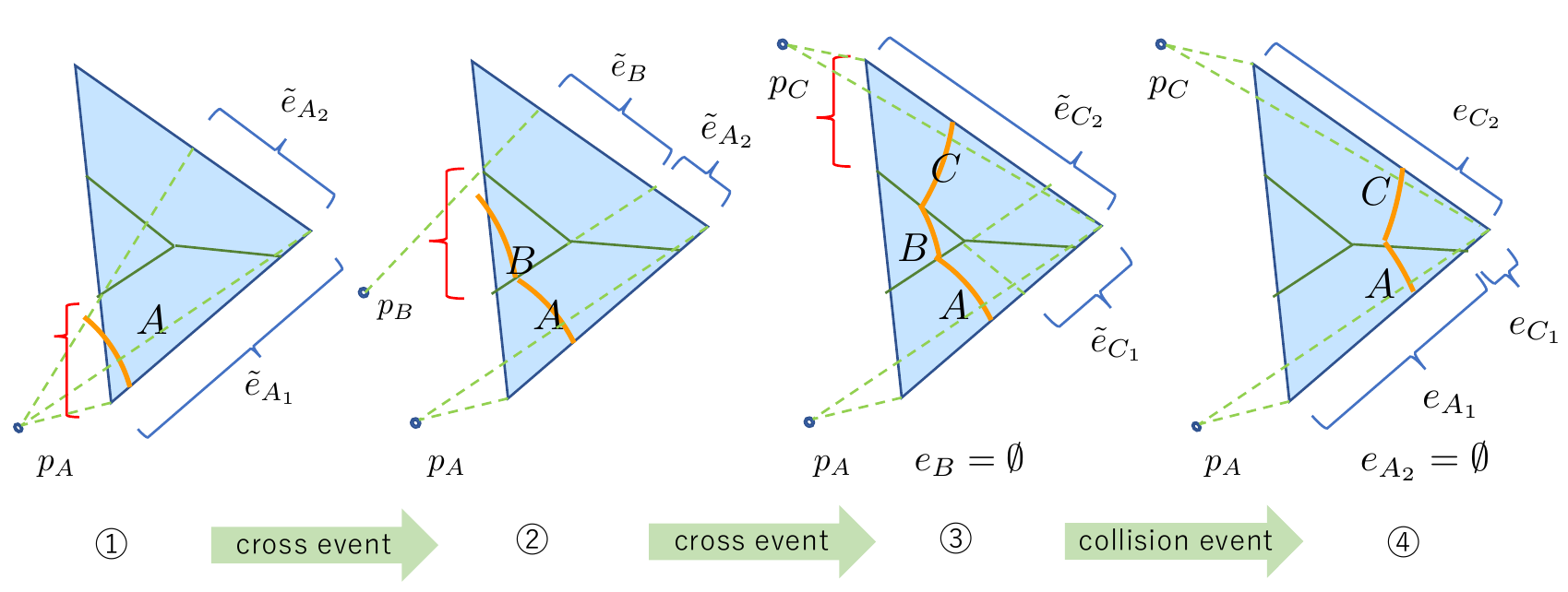}
	\caption{An example of updating $I.\Extent$ at consecutive several events.
	(1) An arc $A$ is born on this face (propagated from its parent) and temporary extents for a twin is made; those do not need to be edited yet.
	(2) A cross event has just created a new arc $B$; Processing puts a temporary extent in $I_{B}.\Extent$, but soon after, Trimming updates $I_{A_2}.\Extent$ and $I_{B}.\Extent$ by resolving overlaps.
	(3) Another cross event has created the third arc $C$; after Trimming, it turns out that $I_{B}.\Extent=\emptyset$. Since $A$ and $C$ are not neighboring, foreseen extents of $A_i$ and $C_i$ ($i=1,2$) may have overlaps, and do not need to be edited yet.
	(4) The next is a collision event; Processing deletes $I_B$ and Trimming edits all extents and deletes $I_{A_2}$ (resolve the redundant twin). 	There remains $-I_{A_1}-I_{C_1}-I_{C_2}-$ in the interval loop. Finally, $I_{A_1}$ will soon be deleted at the coming cross event.
	}
	\label{temporary_extents_b}
\end{figure}

\subsubsection{Cross event}\label{cross}
Consider the cross event such that both neighboring arcs joined by the ridge point, say $A$ and $B$ in order, have {\em non-empty true extents} just before the event happens. The edge (blue) is locally divided into the two extents. Each of arcs $A$ and $B$ has two patterns according to whether it has already been propagated or not yet; this information (Yes or No) is stored in $I.\IsPropagated$. There are four patterns, see Figure \ref{cross_event}.
In the first and second ones, we simply remove the interval $I_B$ (resp. $I_A$) from the interval loop, and instead, suitably insert a new interval $I_{A'}$ (resp. $I_{B'}$) to produce the new loop, e.g., in case of (No-Yes), we do the manipulation
$$-I_{A}-I_{B}-I_{B'}-
\quad \Longrightarrow \quad -I_{A}-\dbox{$I_{A'}$}-I_{B'}-$$
For the third pattern, just before the cross event happens, both arcs $A$ and $B$ have not yet arrived at $\xi_A$ and $\xi_B$, respectively, thus both $I.\IsPropagated$ are still being `No'. If both arcs have already been propagated, that is the fourth one.

\begin{figure}[h]
	\centering
	\includegraphics[height=3.25cm, pagebox=cropbox]{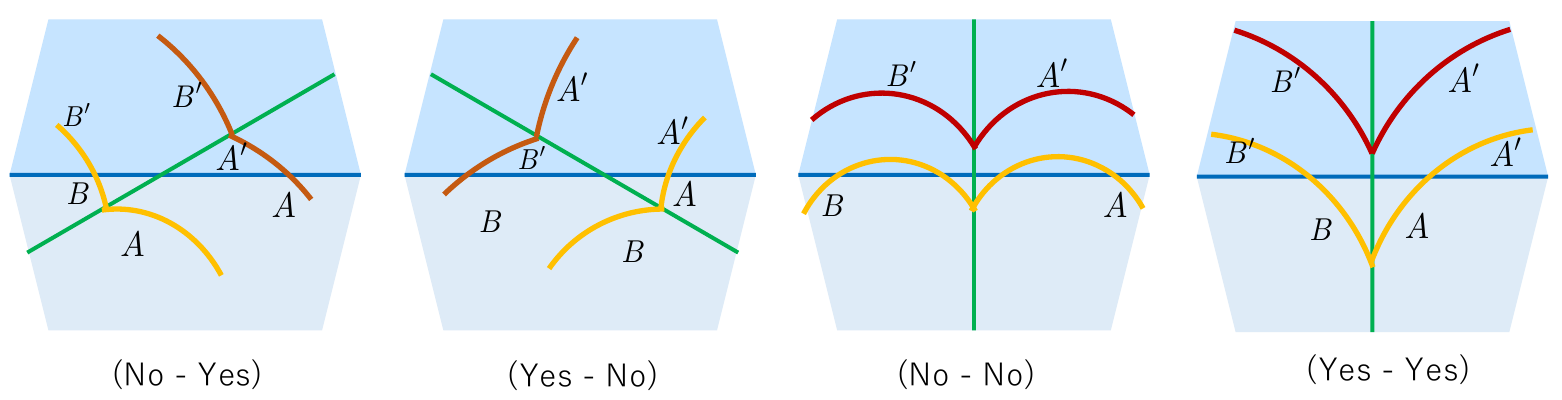}
	\caption{There are four patterns for cross events where the cut-locus (green) intersects with an edge (blue),
	referring to ($I_A.\IsPropagated$ - $I_B.\IsPropagated$). }
	\label{cross_event}
\end{figure}

\subsubsection{Swap event} \label{swap_e}
At a swap event, let $A, B$ be the neighboring arcs joined by a ridge point which hits an edge (Figure \ref{swap_event}). Then, one of them has already been propagated, say it $A$; just before the event, $A'$ is joined with $B$ and the extent of $A'$ is empty.
Just after the event, $A'$ disappears and an arc $B'$ newly arises. Namely, arcs $A'$ and $B'$ are swapped. The manipulation on enriched intervals is as follows:
$$-I_{A}-I_{A'}-I_{B}-
\quad \Longrightarrow \quad -I_{A}-\dbox{$I_{B'}$}-I_{B}-\quad (CW)\;\; $$
$$-I_{A}-I_{B'}-I_{B}-
\quad \Longrightarrow \quad -I_{A}-\dbox{$I_{A'}$}-I_{B}-\quad (CCW)$$

\begin{figure}
	\centering
	\includegraphics[height=4.5cm, pagebox=cropbox]{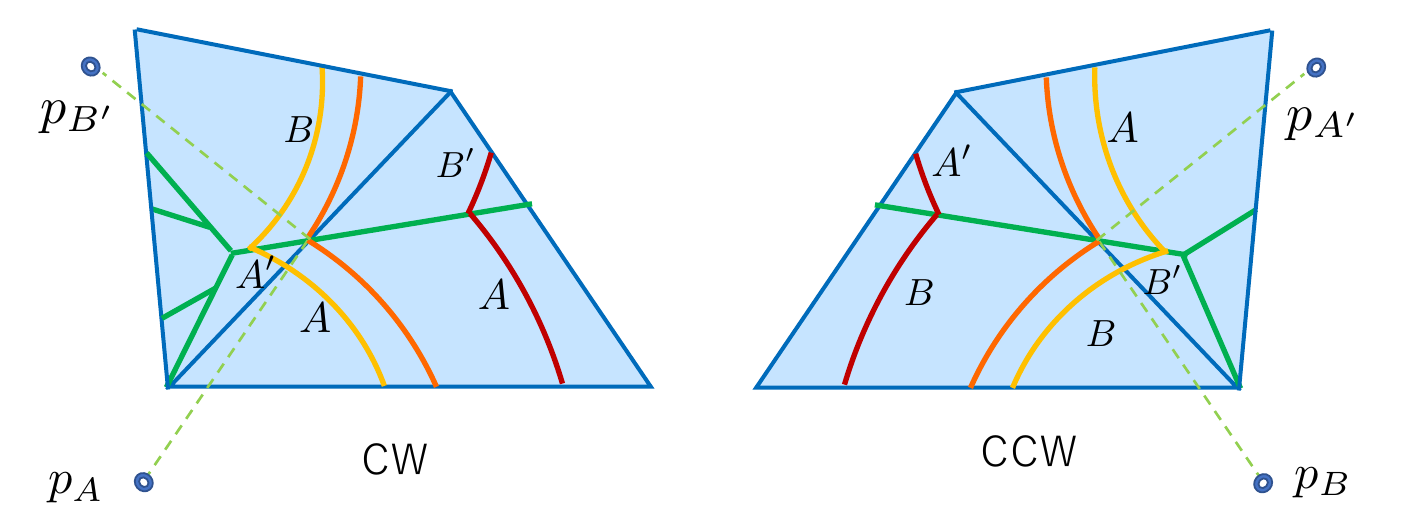}
	\caption{CW/CCW swap events}
	\label{swap_event}
\end{figure}

\subsubsection{Vertex event}
Suppose that an arc $A$ in $\sigma$ meets a vertex $v$ of $\sigma$
and the cut-locus is created in another face $\tau$.
The arc $A$ is represented by twin arcs, say $A_1, A_2$, whose extents are joined at $v$.
For example, look at the left picture of Figure \ref{vc} which is an unfolding around $v$ on the plane $H_\tau$.
Just before the event happens, the twins have already been propagated, and moreover $A_2'$ has also been propagated,
so intervals $I_{A_1'}$, $I_{A_2'}$ and $I_{A_2''}$ exist. Then we do the manipulation
$$-I_{A_1'}-I_{A_1}-I_{A_2}-I_{A_2'}-I_{A_2''}-
\quad \Longrightarrow \quad
-I_{A_1'}-\dbox{$I_{A_1''}-I_{A_2'''}$}-I_{A_2''}-$$

\begin{figure}
	\centering
	\includegraphics[height=5cm, pagebox=cropbox]{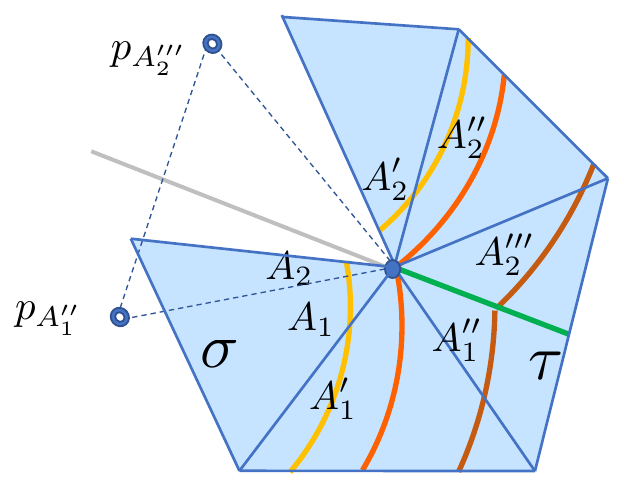} \qquad
	\includegraphics[height=5cm, pagebox=cropbox]{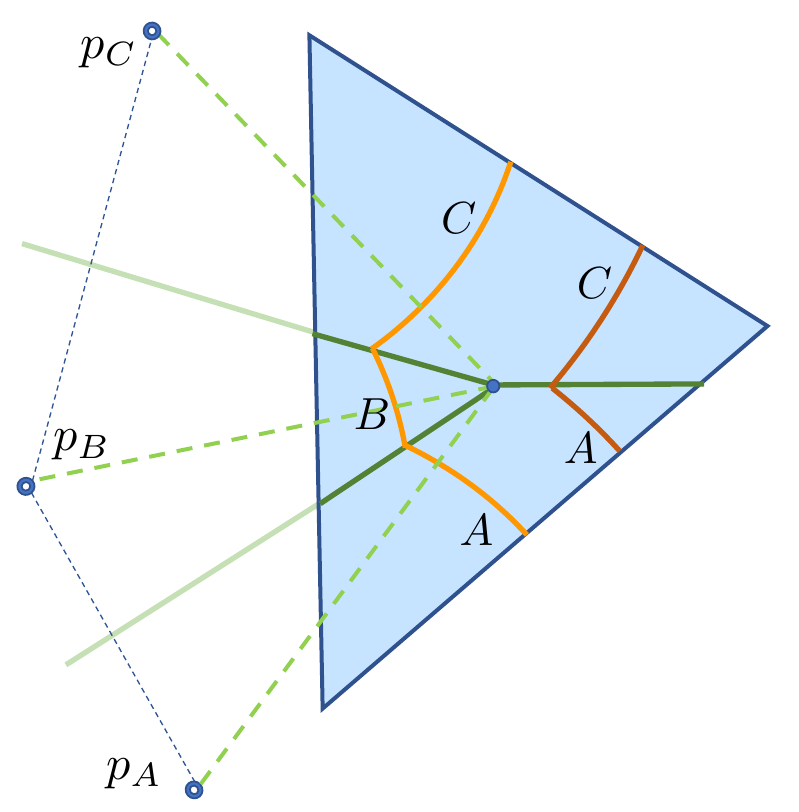}
	\caption{Vertex event and collision event}
	\label{vc}
\end{figure}

\begin{remark}\upshape
Our algorithm may belatedly detect a vertex event which has actually occurred in the past.
This delay is due to our simplification rule to ignore the tangency of the wavefront and an edge.
In Figure \ref{delay}, the arc $A$ gets to be tangent to the edge $e$, and soon after, it meets a vertex $v$,
but we do not recognize this `partial propagation' of $A$, because $I_A.\IsPropagated$ is still being `No'.
When $A$ reaches the endpoint $\xi_A$ of its extent $e_A\subset e$, we update $I_A.\IsPropagated$ to be `Yes', and only then  new twins $I_{A_1}-I_{A_2}$ are recognized. The next Detection step now detects this vertex event at $v$.
Processing and Trimming perform and draw the cut-locus created at $v$ belatedly.
\label{delay_remark}
\end{remark}

\begin{figure}
	\centering
	\includegraphics[height=4cm, pagebox=cropbox]{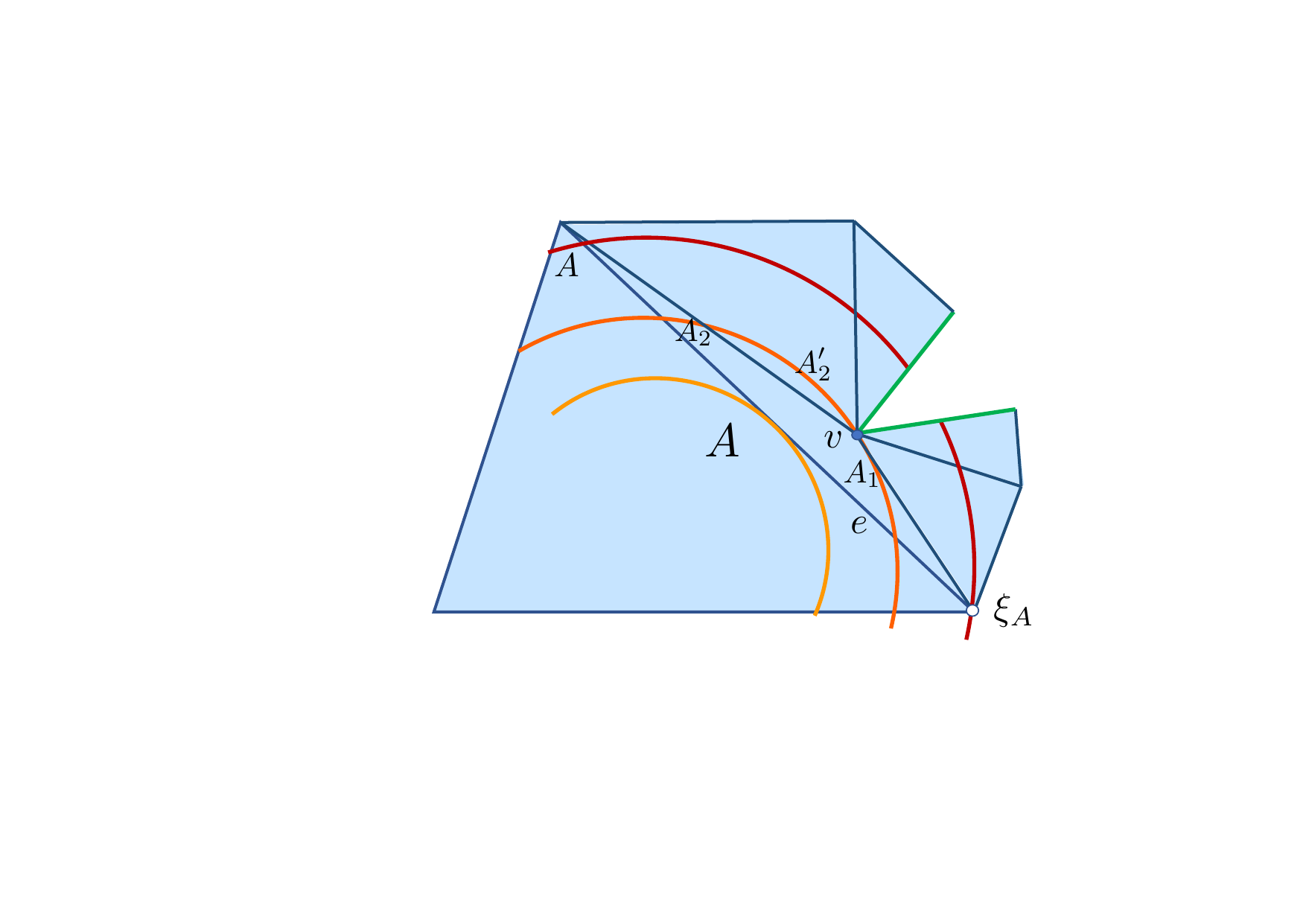}
	\caption{Delayed vertex event}
	\label{delay}
\end{figure}

\subsubsection{Collision event}
A collision event happens when three consecutive arcs, say $A, B, C$ in order, lie on the same face and $B.\Extent$ is empty (Figure \ref{vc}, the right).
Note that the final event is detected as three collision events that occur at the same point. If $W(r)$ consists of only three arcs and intervals, the collision event is the final event: just stop the algorithm.
Otherwise, the manipulation is simply to delete $I_B$:
$$-I_{A}-I_{B}-I_{C}-
\quad \Longrightarrow \quad -I_{A}-I_{C}-$$

\subsection{Trimming} \label{trim}
After Processing is finished, some of temporary extents of new/remaining intervals need to be corrected. This editing process is based on a similar one called {\em trimming} in the MMP algorithm, but slightly modified. For each pair of neighboring non-twin intervals sharing the same face created in Processing, we check whether there is an overlap or not; if so, we correct it and update their items $I.\Extent$.
Let $I-J$ be such a pair of intervals. We can divide into two possible cases according to whether $I.\Edge$ is equal to $J.\Edge$ or not. The former case is the same as described in the MMP algorithm, but the latter case is our original generalization. In the former case, we calculate the ridge point which hits the common edge of $I$ and $J$, and set it as the end point of $I$ and the starting point of $J$. In the latter case, we calculate the possible ridge point hitting each edge of $I$ and $J$. Namely, if the ridge point hits $I.\Edge$, set it as the end point of $I$ and set $J.\Extent$ to be empty, and if the ridge point hits $J.\Edge$, set it as the starting point of $J$ and set $I.\Extent$ to be empty.

In the process, it can happen that at least one of twin intervals has the empty extent.
We say that they are {\em redundant twin}.
By definition, each of twin intervals must have non-empty extent, thus we need to resolve the redundant twin.
If only one of their extents is empty, remove the interval, and if both extents are empty, remove one of them, e.g., let the first interval remain (notice that any enriched interval with the empty extent is still in use in the expression of the wavefront).
Finally we produce the new interval loop.

\section{Computational Complexity}
\subsection{Theoretical Upperbound}\label{complexity}
We assume that the source point $p$ is generic and the wavefront collapses to the farthest point without any bifurcation events during the propagation.
Let $n$ be the number of vertices of $S$. Then the number $b$ of (undirected) edges is $3(n-2)$ and the number $c$ of faces is $2(n-2)$;
indeed, we have $2b=3c$ and $n-b+c=2$ (the Euler characteristic of the $2$-sphere).

\begin{lemma}
	The number of the vertex events is $n$. The number of the collision events is $n - 2$ (here the final event is counted as one collision event).
\end{lemma}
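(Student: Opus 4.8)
The plan is to read off both counts directly from the tree structure of the cut-locus guaranteed by Lemma~\ref{lem1}, combined with an elementary degree (handshake) count. Recall that Lemma~\ref{lem1} asserts that $C=C(S,p)$ is a tree whose ends (degree-$1$ vertices) are exactly the vertices of $S$ at which vertex events occur, and whose remaining vertices are all trivalent, being precisely the points where collision events (including the final event) happen. Everything reduces to counting leaves and trivalent vertices of this tree.

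First I would count the vertex events. Since $p$ is generic it lies in the interior of a face (Definition~\ref{generic}), so $p$ is distinct from every vertex $v$ of $S$; moreover the final event occurs in the interior of a face, so the farthest point is not a vertex and every vertex is reached strictly before the propagation terminates. A vertex $v$ belongs to $W(r)$ only for the single value $r=d(p,v)$, so the propagating wavefront meets $v$ exactly once, producing exactly one vertex event there; conversely, by Lemma~\ref{lem1} and the discussion preceding it, every vertex event happens at a vertex of $S$ and creates one end of $C$. This gives a bijection between vertex events and the $n$ vertices of $S$, so the number of vertex events is $n$, i.e.\ $C$ has exactly $n$ leaves.

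Next I would count the collision events by a handshake argument on the tree $C$. Let $I$ denote the number of trivalent vertices of $C$; by Lemma~\ref{lem1} these are exactly the collision events, the final event among them. A tree with $V=n+I$ vertices has $V-1$ edges, so summing vertex degrees gives
\[
n\cdot 1 + I\cdot 3 = 2\bigl((n+I)-1\bigr),
\]
which simplifies to $I=n-2$. Hence there are $n-2$ collision events, with the final event counted once.

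The argument is essentially combinatorial once Lemma~\ref{lem1} is granted, so there is no serious technical obstacle; the only points needing care are the two bijections. For the vertex events I must ensure that genericity forbids two vertices being hit simultaneously and forbids a single vertex from producing anything other than one clean end of $C$ -- both follow from Definition~\ref{generic} and the discussion before Lemma~\ref{lem1}. For the collision events the delicate point is to confirm that the final event, although three ridge points meet there, contributes a \emph{single} trivalent vertex (three cut-locus edges terminating at one point) rather than being double-counted; this is exactly the convention recorded in the statement, and it is what makes the degree count come out to $n-2$.
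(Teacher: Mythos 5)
Your proof is correct, but it takes a different route from the paper's. The paper argues dynamically: it tracks the number of ridge points on the wavefront as $r$ increases, noting that this count goes up by one at each vertex event, down by one at each non-final collision event, down by three at the final event, and is unchanged by edge events; since the count starts and ends at zero and each of the $n$ vertices of $S$ is hit exactly once, the totals follow. You instead argue statically, invoking Lemma~\ref{lem1}: the cut-locus $C$ is a tree whose leaves are the $n$ vertices of $S$ and whose internal vertices are all trivalent (one per collision event, the final event included), so the handshake identity $n + 3I = 2\bigl((n+I)-1\bigr)$ forces $I = n-2$. The two arguments are close cousins -- both rest on the same bijection between vertex events and vertices of $S$, which you justify more carefully than the paper does -- but yours outsources the combinatorics to the tree structure of $C$, whereas the paper's bookkeeping argument is self-contained and does not depend on Lemma~\ref{lem1} (whose own proof the paper only sketches in the appendix). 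What your version buys is a cleaner, purely graph-theoretic derivation and an explicit account of why there are exactly $n$ vertex events; what it costs is the logical dependence on Lemma~\ref{lem1}, so anyone auditing the argument must accept the tree structure first. One small remark: your worry about two vertices being hit simultaneously is unnecessary (and is not actually excluded by Definition~\ref{generic}) -- the count of vertex events is one per vertex of $S$ regardless of whether two such events share the same time $r$, since each event occurs at a distinct vertex and contributes a distinct leaf of $C$.
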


\begin{proof}
	The number of ridge point on the wavefront increases by one at a vertex event, decreases by one at a collision event except the final event, and decreases by three in the final event. Other types of events do not change the number.
\end{proof}

\begin{lemma}
	At any given time $r$, the number of the ridge points is $O(n)$.
\end{lemma}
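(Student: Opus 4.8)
The plan is to control the number of ridge points by a \emph{running budget}, tracking how each generic geometric event changes the count and bounding it by the event totals already established in the preceding lemma. Recall from the proof of that lemma the precise bookkeeping for each event type: a vertex event creates exactly one new ridge point, a non-final collision event destroys exactly one, the final event destroys three, and every remaining event leaves the count unchanged. The latter is the point I would check first: at a cross or swap edge event a single ridge point merely crosses or is transported along an edge, so it is neither created nor annihilated; bifurcation events are excluded by our standing assumption; and the tangency / partial-propagation situations of Remark~\ref{partial_propagation}, together with the delayed vertex events, do not alter the ridge-point count either—they only shift the bookkeeping moment.

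Granting this, let $\rho(r)$ denote the number of ridge points of $W(r)$. Since $W(r)$ is initially a small circle carrying no ridge point, we have $\rho(0)=0$, and the observations above give, at any time $r$ before the final event,
$$\rho(r) \;=\; \#\{\text{vertex events in }[0,r]\}\;-\;\#\{\text{non-final collision events in }[0,r]\},$$
because collisions only subtract while edge events contribute nothing. Hence $\rho(r)$ never exceeds the number of vertex events having occurred in $[0,r]$, which is at most the total number of vertex events over the entire propagation. By the preceding lemma this total equals $n$, so $\rho(r)\le n=O(n)$, as claimed.

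The only genuine content is the exhaustiveness of the accounting, and this is where I expect the main (if modest) obstacle: one must be certain that the vertex event is the \emph{sole} mechanism producing a ridge point, and that it produces exactly one. This is precisely what genericity (Definition~\ref{generic}) and the classification of \S 2 supply—genericity guarantees that only the listed events occur, that the wavefront passes each vertex exactly once so that a vertex event spawns a single ridge point, and that collisions involve exactly two ridge points away from the final event. Once the event list is closed in this way, the monotone-budget argument is immediate and I anticipate no further difficulty.

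As an alternative one could argue geometrically that the ridge points of $W(r)$ all lie on the cut-locus $C$, which by Lemma~\ref{lem1} is a tree with $n$ leaves and $n-2$ trivalent internal vertices, hence with $2n-3$ edges; if $W(r)$ met each edge of $C$ at most once, the same $O(n)$ bound would follow. I would not take this route, however, since the monotonicity of $r$ along a cut-locus edge is not entirely clear (the distance along a perpendicular-bisector segment can have an interior minimum, allowing two intersections with a single $W(r)$), whereas the combinatorial budget argument avoids this issue cleanly.
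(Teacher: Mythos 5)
Your proposal is correct and is essentially the paper's own argument: the paper's proof consists of the single observation that the number of ridge points in $W(r)$ is at most the number of vertex events that have occurred, which together with the preceding lemma (at most $n$ vertex events in total) gives the $O(n)$ bound. Your more detailed budget bookkeeping (vertex $+1$, non-final collision $-1$, final $-3$, all else $0$) is exactly the accounting the paper uses to prove that preceding lemma, so you have simply made explicit what the paper leaves implicit.
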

\begin{proof}
	The number of the ridge points in $W(r)$ is equal to or less than the number of vertex events happened, so it is $O(n)$.
\end{proof}

\begin{lemma}\label{edgeevents}
	The sum of the numbers of the edge (cross/swap) events is $O(n^2)$.
\end{lemma}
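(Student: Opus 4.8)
The plan is to reduce the count of edge events to the total number of enriched intervals (arcs) created during the whole propagation, and then to bound that total by $O(n^2)$ using the per-edge window estimate that underlies the MMP algorithm. This is the natural route because the event counts already established give control only on how many ridge points are \emph{present at one instant}, not on how many are \emph{created over time}, and it is the cumulative count that governs the number of edge events.

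First I would read off from the Processing rules in \S\ref{cross} and \S\ref{swap_e} that each cross event and each swap event inserts exactly one new interval into the loop, e.g. $-I_A-I_B-I_{B'}-\Rightarrow-I_A-I_{A'}-I_{B'}-$ in the (No--Yes) cross case, and likewise a single $I_{B'}$ or $I_{A'}$ at a swap event. Thus every edge event can be charged injectively to the birth of one new arc. Together with the $O(1)$ intervals created at each of the $n$ vertex events and $n-2$ collision events (an additive $O(n)$), this shows that the number of edge events is at most the total number of intervals ever created, so it suffices to bound that total by $O(n^2)$.

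Second, I would fix a directed edge $e$ of $S$ and argue that only $O(n)$ distinct intervals are ever hosted on $e$ throughout the propagation. Each interval on $e$ carries a center $I.\Center$, an unfolded image of the source $p$, and distinct intervals correspond to distinct centers; ordering these along $e$ and charging each newly appearing center to the vertex (equivalently, to the edge of the cut-locus) that separates it from its neighbor bounds their number by the number of vertices, giving $O(n)$ intervals per edge. Summing over the $O(n)$ directed edges then yields $O(n^2)$ intervals in total. Equivalently, one may simply invoke the $O(n^2)$ space bound of the MMP algorithm \cite{MMP}, whose intervals correspond to our arcs.

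The hard part will be the per-edge estimate of the second step: showing that at most $O(n)$ windows are ever placed on a single edge is precisely the delicate combinatorial core of the MMP-type analysis, and it genuinely does not follow from the lemmata already proved. I would make it rigorous either by citing the corresponding MMP window bound directly, or by reproving it through the description of $C\cap\sigma$ as a Voronoi diagram of the source images relevant to a face $\sigma$ (noted in the Introduction): such a planar diagram of $m_\sigma$ sites has complexity $O(m_\sigma)$, its crossings with $\partial\sigma$ realize the edge events on the edges of $\sigma$, and $\sum_\sigma m_\sigma$ is exactly the total number of arcs. Hence both formulations coincide, and the whole estimate rests on the single $O(n)$-per-face (equivalently per-edge) window bound, which is where I expect the real work to lie.
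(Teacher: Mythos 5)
Your skeleton --- reduce edge events to a per-edge count and then charge separators on an edge to vertices --- is the right one and runs parallel to the paper's, but the two reductions differ, and yours has genuine gaps. The paper never counts intervals at all: since a cross/swap event is, by definition, the moment a ridge point hits an edge, the number of edge events equals the number of crossing points of the cut-locus $C$ with the edges, and the entire proof is a bound on $\#(C\cap e)$ for a fixed edge $e$. Each crossing $a\in C\cap e$ determines the sub-tree of $C$ consisting of the ridge points that flow into $a$; for distinct crossings these sub-trees are pairwise disjoint (every point of $C$ is swept by a unique ridge point with a unique forward trajectory), and each sub-tree contains at least one vertex of $S$, because ridge points are born only at vertex events. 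Hence $\#(C\cap e)\le n$, and the total is at most $nb=O(n^2)$. This disjoint-sub-tree argument is precisely the injectivity statement (``each vertex is charged at most once per edge'') that your charging sketch asserts but does not prove, and which you otherwise outsource to a citation of the MMP window bound.

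Beyond that deferred crux, two concrete steps of your reduction fail as stated. First, it is not true that every edge event inserts exactly one interval: of the four cross-event patterns in \S\ref{cross} (Figure \ref{cross_event}), the (Yes--Yes) pattern removes two intervals and creates none, while the (No--No) pattern creates two; so charging events injectively to interval births does not work (charging to births \emph{and} deaths does, at the cost of a factor $2$, or one can charge each edge event to an endpoint of a nonempty true extent on that edge). Second, and more seriously, your per-edge count quantifies over \emph{all} intervals ever hosted on $e$, but enriched intervals whose true extent ends up empty cannot be ordered along $e$ at all, and their number is a priori controlled only by the number of events --- so bounding edge events by ``total intervals'' and total intervals by ``events'' is circular. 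The same circularity infects your Voronoi fallback: it bounds the edge events on $\partial\sigma$ by $O(m_\sigma)$, where $\sum_\sigma m_\sigma$ is the total number of arcs, again a quantity bounded only in terms of the number of events. The repair is exactly the paper's move: count only the crossing points of $C$ with $e$ (equivalently, endpoints of nonempty true extents, which have pairwise disjoint interiors because each point of $S$ is passed by the wavefront exactly once), and prove the per-edge bound by the disjoint sub-tree argument above rather than by citation.
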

\begin{proof}
	It is equal to how many times the cut-locus $C$ intersects the edges. For each edge $e$, let $a$ an intersection point of $C$ and $e$. $a$ determines a sub-tree of $C$, which consists of the ridge points going to $a$. Those sub-trees are disjoint, therefore the number of possible intersection $C\cap e$ is at most $n$ (since every ridge originated from at least one vertex). Thus the number of all intersections is bounded by $nb$, therefore $O(n^2)$.
\end{proof}

\begin{lemma}
	All vertex events take $O(n)$ time in total to be processed. Each of other events takes $O(1)$ time per event to be processed.
\end{lemma}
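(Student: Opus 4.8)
The plan is to charge the processing time of an event to the number of enriched intervals it inserts, deletes, or relinks. Because the interval loop $\mathbf{I}_r$ is maintained as a doubly-linked cyclic list, and because creating a child interval requires only a single rotation of $\R^3$ across one edge to obtain its $I.\Center$ (together with its $I.\Edge$ and a provisional $I.\Extent$), each such primitive operation costs $O(1)$. Hence the processing cost of an event equals, up to a constant factor, the number of intervals it touches.

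First I would read off this count for each event type from the manipulations in \S\ref{processing}. A collision event performs $-I_A-I_B-I_C-\;\Longrightarrow\;-I_A-I_C-$, deleting the single interval $I_B$. A cross event deletes one interval and inserts one, as in $-I_A-I_B-I_{B'}-\;\Longrightarrow\;-I_A-I_{A'}-I_{B'}-$, and a swap event likewise deletes one and inserts one. The displayed vertex-event manipulation $-I_{A_1'}-I_{A_1}-I_{A_2}-I_{A_2'}-I_{A_2''}-\;\Longrightarrow\;-I_{A_1'}-I_{A_1''}-I_{A_2'''}-I_{A_2''}-$ deletes three intervals and inserts two. In every case only a bounded number of $I.\Prev$, $I.\Next$, $I.\Ridge$, $I.\Extent$ fields are rewritten, so each single event is processed in $O(1)$ time.

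This already yields the second assertion. For the first, the lemma above gives exactly $n$ vertex events, so summing the $O(1)$ per-event cost over them gives $O(n)$ in total. As a robust alternative---useful in case some sub-case of a vertex event at $v$ reorganizes all arcs lying on the faces incident to $v$---I note that such an event then touches at most $O(\deg(v))$ intervals; summing and using the handshake identity
$$\sum_{v}\deg(v)\;=\;3c\;=\;6(n-2)$$
for a triangulated $2$-sphere bounds the grand total by $O(n)$ once more.

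The hard part will be not the counting but the book-keeping concealed in the phrase ``several divided cases'' of \S\ref{processing}. Concretely, I would verify that in every sub-case a collision, cross, or swap event still touches only $O(1)$ intervals, and that a vertex event remains \emph{local}, i.e. it rewires only intervals on faces incident to the struck vertex rather than cascading into propagations across distant faces. Once this locality is confirmed, both the per-event $O(1)$ bound and the aggregate $O(n)$ bound for the vertex events follow immediately from the linked-list cost model above.
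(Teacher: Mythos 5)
Your instinct about where the difficulty lies is exactly right, and your ``robust alternative'' is in fact the paper's proof, not an alternative to it. The paper deliberately phrases the vertex-event bound as \emph{$O(n)$ in total} rather than $O(1)$ per event, because a single vertex event is \emph{not} $O(1)$: the displayed manipulation $-I_{A_1'}-I_{A_1}-I_{A_2}-I_{A_2'}-I_{A_2''}-$ is only a typical example. By the time the twins $A_1, A_2$ reach $v$, their children, grandchildren, etc.\ (created by earlier cross/swap events) have already propagated into the faces surrounding $v$, and all of those intervals converge at $v$ and must be deleted and replaced at the moment of the event; their number is governed by the number of faces (equivalently, directed edges) incident to $v$, i.e.\ it is $\Theta(\deg v)$. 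So your primary count (three deletions, two insertions, hence $O(1)$ per vertex event, times $n$ events) overclaims and should be discarded rather than ``verified.'' What remains is precisely the amortized argument: the paper bounds the total work of all vertex events by the number of directed edges $2b = 6(n-2) = O(n)$, which is the same quantity $\sum_v \deg(v)$ in your handshake identity. Your treatment of the other events (cross/swap: one interval deleted, one inserted; collision: one deleted, two neighbors updated; all with $O(1)$ pointer and field rewrites in the doubly-linked loop) matches the paper's reasoning, so once the primary vertex-event count is dropped in favor of the amortized one, your proof coincides with the paper's.
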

\begin{proof}
	The total number of calculation caused by all vertex events is estimated to be $2b\, (=6(n-2))$, which is the number of directed edges.
	A cross event or swap event takes $O(1)$ time, because it has at most two intervals to be propagated or deleted.
	A collision event takes $O(1)$ time, because it has only one interval to be deleted and two adjacent intervals to be updated.
\end{proof}

\begin{theorem}
	Our algorithm takes $O(n^2 \log n)$ time and $O(n)$ space.
\end{theorem}
\begin{proof}
	While each event takes $O(1)$ time on average to be processed, it requires $O(\log n)$ time to be scheduled using a priority queue. The overall number of the events is $O(n^2)$, thus the time complexity is $O(n^2 \log n)$.
	There are $O(n)$ intervals in the interval loop at any given time. Because at most one event is associated with an interval, there are $O(n)$ events in the event queue at any given time. thus the space complexity is $O(n)$.
\end{proof}

\

Our algorithm can support the shortest path query using extra space complexity:

\begin{theorem}
	Our algorithm takes $O(n^2 \log n)$ time and $O(n^2)$ space for supporting the shortest path query.
\end{theorem}

\begin{proof}
	To do this, all intervals that are generated and removed from the wavefront during the algorithm running are required to the path query. They must be retained in the memory. For each edge, intervals are separated by the intersections with the cut-locus. Therefore there are $O(n^2)$ intervals overall and the space complexity is $O(n^2)$. The time complexity does not change.
\end{proof}

\subsection{Experimental Result}
An experimental result regarding computational complexity is shown below. We took the recursively subdivided surfaces of a regular octahedron using the Loop subdivision scheme \cite{Loop} (Figure \ref{subdivided_octahedron}).

\begin{figure}[h]
	\centering
	\includegraphics[height=6cm, pagebox=cropbox]{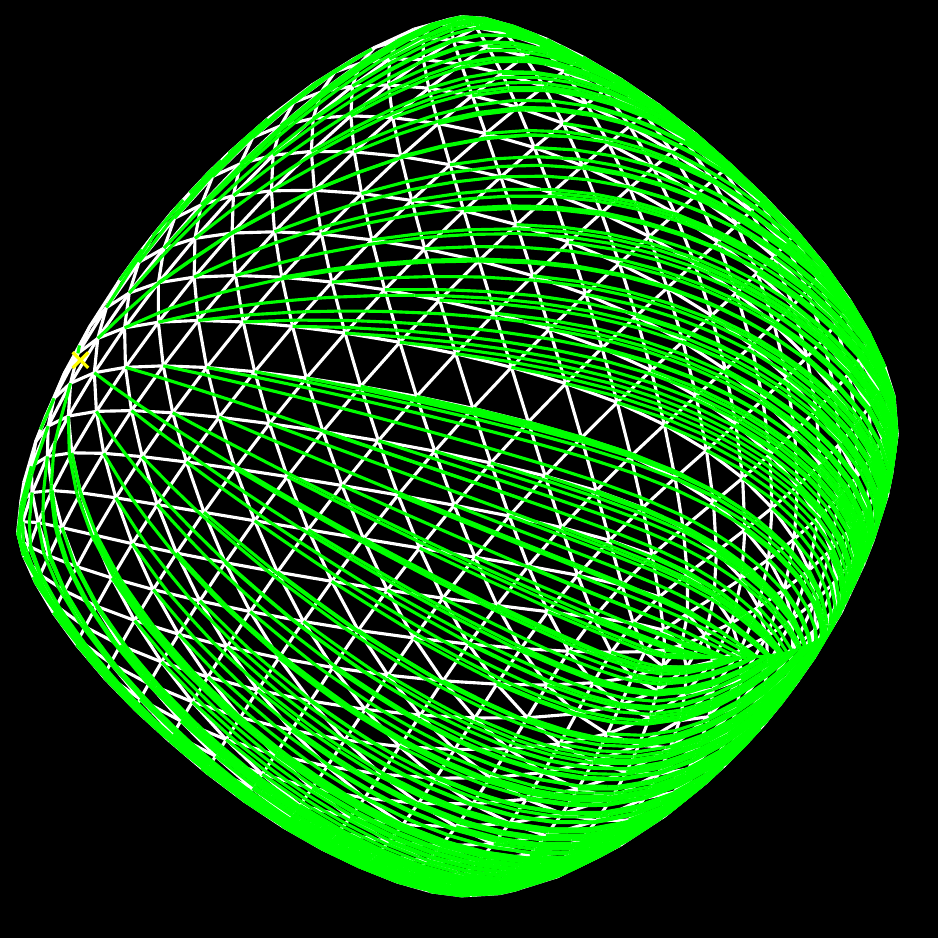}
	\includegraphics[height=6cm, pagebox=cropbox]{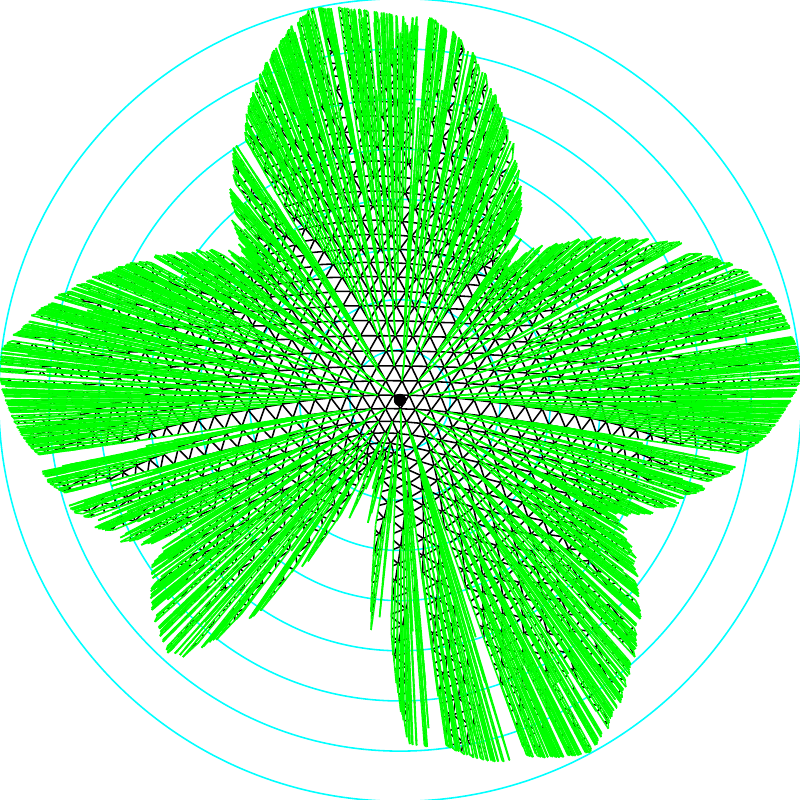}
	\caption{Level-4 subdivided surface of an octahedron and its source unfolding}
	\label{subdivided_octahedron}
\end{figure}

The table below shows the \textit{level} (how many times the subdivision performed from the initial octahedron), the number of vertices, the number of faces, the overall computation time, the memory usage, the number of total processed events, and the computation time per event. Here the $O(n^2)$ space variant (supporting the shortest path query) is used.

\begin{center}
\begin{tabular}{|r|r|r|r|r|r|r|}
level & vertices & faces & time (sec.) & memory (MB) & events & $\mu$s/event \\
4 & 1026 & 2048 & 0.051 & 11 & 15737 & 3.2 \\
5 & 4096 & 8192 & 0.506 & 71 & 125350 & 4.0 \\
6 & 16386 & 32768 & 4.315 & 481 & 889247 & 4.8 \\
7 & 65538 & 131072 & 40.214 & 3788 & 7158370 & 5.6
\end{tabular}
\end{center}

Like an experimental result of the MMP algorithm by Surazhsky et al. \cite{Surazhsky}, experimental performance of our algorithm is sub-quadratic, both in terms of time and space. This is due to the fact that estimation of the number of edge events given by Lemma \ref{edgeevents} is too pessimistic and sub-quadratic in practice. Note that the memory usage is measured by the runtime, and in fact, contains the 3D mesh data as well as other miscellaneous things that make our program actually works. We can see that the computation time per event is clearly linearly correlating with $\log n$, and considering that, an experimentally-estimated complexity in this example is given by $O(n^{1.47} \log n)$ time and $O(n^{1.47})$ space, since the number of events is estimated to be $O(n^{1.47})$.

\section{Conclusion}

In the present paper, we have proposed a novel generalization of the MMP algorithm; it produces an interactive visualization of the wavefront propagation and the cut-locus on a convex polyhedral surface $S$, and finally provides a nice planar unfolding of $S$ without any overlap, instantaneously and accurately.
Here we consider {\em generic} source points, that is sufficient for our practical purpose, and indeed, that enables us to classify what kind of geometric events arises in the wavefront propagation and makes the algorithm simple enough to be treated.
A main idea is to introduce the notion of an {\em interval loop}, which is a new data structure representing the wavefront. It is propagated as the time (distance) $r$ increases.
The computational complexity of our algorithm is the same as the original MMP, while our actual use is supposed for polyhedra with a reasonable size of number $n$ of vertices. We have successfully implemented our algorithm to computer -- it works well as expected and we have demonstrated a number of outputs. There is still large room for further development.

\section*{Acknowledgements}
The authors sincerely appreciate Professors Takashi Horiyama and Jin-ichi Ito for patiently listening to the first author's early studies and giving him valuable advices. This work was partly supported by GiCORE-GSB in Department of Information Science and Technology, Hokkaido University, and JSPS KAKENHI Grant Numbers JP18K18714.

%%%%%%%%%%%%%%%%%%%%%%%%%%%%%%%%%%%%%%%%%%%

\appendix
\section{}
Lemma \ref{lem1} is easy from the definition of generic geometric events.
We prove Lemma \ref{lem2} below.
Let $S$ be a convex polyhedral surface in $\R^3$.
Our task is to precisely characterize the set $\Gamma$ of non-generic source points on $S$. First, according to Definition \ref{generic}, the set $\Gamma$ consists of $p \in S$ where
\begin{itemize}
\item $p$ is a vertex or lies on an edge of $S$;
\item $p$ lies on the interior of a face such that at least one of the following properties holds: during the wavefront propagation $W(r, p)$ as $r$ varies,
\begin{enumerate}
\item[(v)] a ridge point passes through a vertex $v$ of $S$;
\item[(e)] a collision event happens at a point of an edge $e$ of $S$;
\item[(c)] more than two ridge points collide at once at a single point on a face of $S$, except for the case of final events;
\end{enumerate}
\end{itemize}

\noindent
\underline{Case (v)}: Let $p$ be an interior point of a face.
The case (v) means that there are multiple shortest paths between $p$ and $v$, and
that is equivalent to that $p$ lies on the cut-loci $C(S, v)$.
Thus $\Gamma$ contains the union of $C(S, v)$ for all vertices $v$.
As a remark, the case that a ridge point of $W(r, p)$ moves along an edge $e$ of $S$
is also regarded as the case (v), because the ridge point passes through at least one of endpoints (vertices) of $e$.

\

\noindent
\underline{Cases (e, c)}:
Let $p_0$ be an interior point of a face $\sigma$.
Suppose that a collision event for the wavefront $W(r,p_0)$ happens at $q_0 \in S$.
There are at least three shortest paths from $p_0$ to $q_0$.
Take an unfolding $\mathcal{S} \subset \R^2$ of part of $S$ so that each of three shortest paths from $p_0$ to $q_0$ is expressed by the line segment between $q_0$ and each of three different centers corresponding to $p_0$ (see Figure \ref{vc}, the right).
Pick a small disk $U$ centered at $p_0$ in the interior of $\sigma$.
Let $(s, t)$ be any linear coordinates of $U$ and $(x, y)$ the coordinates of $\R^2$ containing $\mathcal{S}$.
On $\mathcal{S}$, there are three copies of $U$;
to every $p \in U$, we assign three points on $\mathcal{S}$, say $p_A$, $p_B$ and $p_C$ ordered clockwise with respect to $q_0$.
By the construction of $\mathcal{S}$, coordinates of $p_A$, $p_B$ and $p_C$ are written by linear functions in $s, t$. Let $L_{AB}$ (resp. $L_{BC}$) be the vertical bisector of the segment between $p_B$ and $p_A$ (resp. $p_C$);
each bisector is written in the form $y=\frac{\beta x + \gamma}{\alpha}$, where $\alpha$, $\beta$ are linear functions and $\gamma$ is a quadratic function in $s, t$ (use a rotation of $\R^2$ if needed).
Compute the common point $q$ of $L_{AB}$ and $L_{BC}$, then we obtain $(x, y)=(\varphi_1(s, t), \varphi_2(s, t))$, where $\varphi_1$ and $\varphi_2$ are some rational functions in $s, t$. Since the initial solution exists for $p_0$, that is $q_0$, we may take $U$ sufficiently small so that the solution $q$ always exists for any $p \in U$. Then we find consecutive arcs $A-B-C$ participating in the wavefront caused from $p$, centered at $p_A$, $p_B$ and $p_C$, respectively, that meet a collision event at $q$ near $q_0$.

\begin{itemize}
\item[(e)]
Suppose that $q_0$ lies on an edge $e$. On $\mathcal{S}$, let $e$ be presented by part of the line given by a linear equation in $x, y$. Substitute $(x, y)$ by $(\varphi_1(s, t), \varphi_2(s, t))$, that yields an algebraic equation in $s, t$.
\item[(c)]
Suppose that more than two ridge points of $W(r, p_0)$ meet a collision event at $q_0$. Take a suitable unfolding of (part of) $S$ and consider four consecutive arcs $A-B-C-D$ in order. Three centers $p_A$, $p_B$ and $p_C$ define a rational map $q=(\varphi_1(s, t), \varphi_2(s, t))$, and the center $p_D$ of the last arc $D$ is constrained by the condition $d(p_D, q)=d(p_A, q)$. Hence, we have an algebraic equation in $s, t$.
\end{itemize}

Consequently, combining (v) as well, we see that for any point $p_0 \in \Gamma$, there is a neighborhood $U$ of $p_0$ in $S$ such that $\Gamma \cap U$ consists of finitely many algebraic curves in $U$.
Since $\Gamma$ is bounded and closed, it is covered by finitely many such open sets.
In particular, the complement $S-\Gamma$, the set of {\em generic} source points, is open and dense.
This completes the proof. \hfill $\square$

\end{document}